\documentclass[journal,twoside,web]{ieeecolor}
\usepackage{generic}
\usepackage{cite}
\usepackage{amsmath,amssymb,amsfonts}
\usepackage{algorithmic}
\usepackage{graphicx}
\usepackage{algorithm,algorithmic}
\usepackage{bm}
\usepackage[caption=false,font=footnotesize]{subfig}
\usepackage{hyperref}
\hypersetup{hidelinks}
\usepackage{textcomp}
\usepackage{array}
\usepackage{booktabs}

\usepackage{amsthm}
\newtheorem{thm}{Theorem}
\newtheorem{prop}{Proposition}
\newtheorem{lem}{Lemma}
\newtheorem{corollary}{Corollary}
\theoremstyle{remark}
\newtheorem{defn}{Definition}
\newtheorem{prob}{Problem}
\newtheorem{remark}{Remark}

\newcolumntype{C}[1]{>{\centering\arraybackslash}m{#1}}
\newcommand{\R}{\mathbb{R}}
\newcommand{\G}{\mathcal{G}}
\newcommand{\Oc}{\mathcal{O}}

\newcommand{\M}{\mathcal{M}}
\usepackage{multirow}
\usepackage[caption=false,font=footnotesize]{subfig}
\usepackage{bbm}

\newcommand{\X}{\mathcal{X}}

\newcommand{\ub}{\bm{u}}

\newcommand{\yb}{\bm{y}}

\newcommand{\xb}{\bm{x}}
\newcommand{\Gb}{\bm{G}}
\newcommand{\Ob}{\bm{O}}

\DeclareMathOperator*{\argmin}{arg\,min}

\begin{document}

\title{Robustly Invertible Nonlinear Dynamics and the BiLipREN: From Inversion-Based Control to Generative Trajectory Modelling}

\author{Yurui Zhang, Ruigang Wang and Ian R. Manchester
\thanks{This work was supported in part by the Australian Research Council through projects DP230101014 and IH210100030.}
\thanks{The authors are with the Australian Centre for Robotics (ACFR), and the School of Aerospace, Mechanical and Mechatronic Engineering, The University of Sydney, Australia, {\tt\small \{yurui.zhang, ruigang.wang, ian.manchester\}@sydney.edu.au}}.}

\maketitle

\begin{abstract}
This paper proposes a new notion of robust invertibility for nonlinear dynamical systems, and introduces constructive parameterizations of recurrent neural network which are robustly invertible by design. We define robust invertibility as the existence of a causal inverse system such that both the forward and inverse systems are contracting and have bounded incremental input-output gains (the system is bi-Lipschitz), implying that both forward prediction and input reconstruction are robust to signal perturbations and initial-state mismatch. We  construct robustly invertible recurrent models via series composition of static orthogonal layers and dynamic layers satisfying a strong input-output monotonicity property, and provide a differentiable neural network parameterizations in the form of the bi-Lipschitz recurrent equilibrium network (BiLipREN). Additionally, composition with  dynamic orthogonal layers yields a nonlinear minimum-phase/all-pass  (a.k.a. inner--outer) factorization. We illustrate  the utility of the framework through a series of application examples in data-driven internal model control, dynamic surrogate loss learning, and signal-space normalizing flows, illustrating its utility for robust control, trajectory optimization, and generative modeling of complex trajectory distributions.
\end{abstract}

\begin{IEEEkeywords}
Stability of Nonlinear Systems, Contraction Theory, Neural Networks, Inverse Optimal Control, Generative Modelling.
\end{IEEEkeywords}

\section{Introduction}
\label{sec:introduction}
\IEEEPARstart{T}{he} notion of dynamic system invertibility plays a fundamental role in control theory and applications. In the absence of any uncertainty, an ideal inverse of a system provides a perfect feedforward controller, while in the feedback setting robust  control was famously characterized by Zames in terms of existence of an approximate inverse \cite{zames1981feedback}. 

Invertible nonlinear functions play a similarly fundamental role in machine learning, especially generative modelling through the concept of a \textit{normalizing flow} \cite{papamakarios2021normalizing}: an invertible mapping between a simple probability distribution (a multivariate normal) and a complex distribution (e.g. the distribution of images of a certain object).

In this paper, we introduce new notions of invertibility for nonlinear dynamical systems and new recurrent neural network model classes with guaranteed robust invertibility, and illustrate their utility in examples spanning classical control techniques to modern generative modelling.

\subsection{System Inversion and Control}

Many practical techniques in control design and signal processing are based in some way on dynamic system inversion. System inverses are used for feedforward compensation in precision motion control (e.g. \cite{markusson2001iterative, van2018inversion, bolderman2024physics}) and for digital predistortion compensation in electronics (e.g. \cite{ghannouchi2009behavioral, liu2006augmented, tanovic2018equivalent}). System inversion is used a component in feedback schemes such as nonlinear dynamic inversion for flight control (e.g. \cite{reiner1996flight, miller2011nonlinear, sieberling2010robust, wang2017robust}) and internal model control schemes which have been widely applied in the process industries (e.g. \cite{garcia1982internal, economou1986internal}). 
Other works such as \cite{celani2010output, wang2017robust} developed robust output feedback regulators for certain classes of invertible nonlinear MIMO systems.

A linear system $\yb = \bm{G}(z)\ub$ with all poles and zeros strictly inside the unit circle (or left-half-plane in continuous-time) is called a minimum-phase system. For such systems, both the $\bm G(z)$ and its inverse $\bm G(z)^{-1}$ are stable and have finite gain. Many real-world systems are not minimum-phase, but any causal stable system $\bm P(z)$ can be factored into a product of a all-pass (\textit{a.k.a} inner) system $\bm \Ob(z)$ -- containing time-delays and unstable zeros -- and a minimum-phase (\textit{a.k.a.} outer) system $\bm G(z)$ \cite{OppenheimSchafer2010}:
\[
\bm P(z) = \bm O (z)\bm G(z).
\]
Such factorizations are classical and have many applications in signal processing \cite{OppenheimSchafer2010}, e.g. for spectral equalization of channels, while classical feedback control schemes such as the Smith predictors \cite{smith1957closer, laughlin1987smith} and internal model control \cite{garcia1982internal} make use of this factorization, and have also been extended to nonlinear systems, see e.g. \cite{van2018l2}.

\begin{figure*}[!t]
\centering
\renewcommand{\arraystretch}{1.2}
\begin{tabular}{C{1.4cm}|C{0.42\textwidth}|C{0.44\textwidth}}
\hline
\textbf{Mapping}
&
\textbf{Finite-dimensional Vector Space} $\mathbb{R}^n$
&
\textbf{Infinite-dimensional Signal Space} $\ell^n$
\\
\hline
\textbf{Linear}
&
{\large Matrix transformation}

\vspace{0.5em}

\includegraphics[height=3cm]{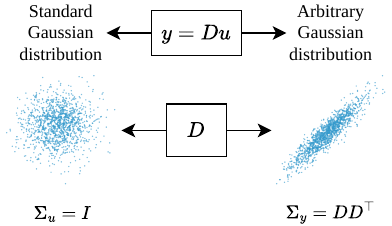}

 $D$ is an invertible matrix
&
{\large Spectral factorization}

\vspace{0.5em}

\includegraphics[height=3cm]{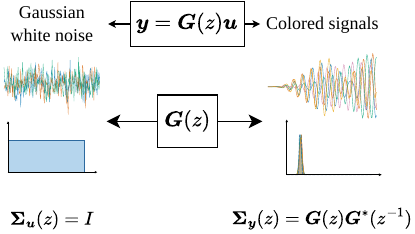}

$\bm G(z)$ is a bi-proper minimum-phase LTI system
\\
\hline
\textbf{Nonlinear}
&
{\large Normalizing flows}

\vspace{0.5em}

\includegraphics[height=3cm]{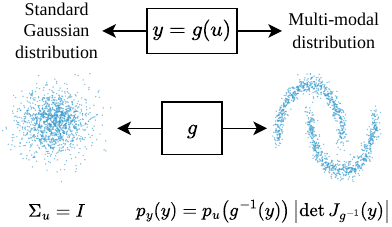}

$g$ is an invertible function
&
{\large The proposed BiLipREN}

\vspace{0.5em}

\includegraphics[height=3cm]{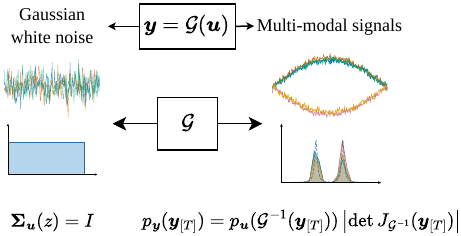}

$\G$ is a robustly invertible system
\\
\hline
\end{tabular}
\caption{A unified view of spectral factorization and generative modelling via linear and nonlinear mappings of Gaussian distributions in vector and signal space.}\label{fig:comparison}
\end{figure*}

\subsection{Surrogate Loss Learning}
In this paper we also investigate the utility of invertible dynamic models for \textit{surrogate loss learning}, i.e. learning an approximation to an intractable or expensive black-box loss (or reward) function for the purposes of optimization. 

The classical \textit{response surface methodology} is widely used in process optimization \cite{myers2016response}. Here the relationship between (vector) inputs $u$ and a loss or reward function $J(u)$ is typically modelled as a low-order polynomial, although other representations such as radial basis functions have been proposed \cite{mcdonald2007global}. In the Bayesian optimization framework (e.g. \cite{jones1998efficient}) a probabilistic surrogate such as a Gaussian process it commonly used, and its predictive distribution is combined with an acquisition function to determine subsequent test points \cite{snoek2012practical}. Surrogate loss learning has also been applied in deep learning for optimizing non-differentiable losses (e.g. \cite{grabocka2019learning, patel2020learning}) and in reward modelling for language models \cite{leike2018scalable}. A class of ``easily optimizable'' neural surrogate losses is the input convex neural network (ICNN) \cite{amos2017input},

In \cite{wangmonotone}, Polyak--\L{}ojasiewicz networks (PLNets) were introduced in the form by representing the surrogate loss as $J=|g(u)|^2$ where $g$ bi-Lipschitz function and $u\in\mathbb R^n$. This construction implies that $J$ satisfies the PL condition guaranteeing global linear convergence of gradient descent \cite{polyak1963gradient, lojasiewicz1963topological, karimi2016linear}, and furthermore allows rapid computation of the surrogate's global minimum via a splitting algorithm. In a sense, they fit surrogate losses which ``easy to optimize'' but less restrictive than ICNNs. In this paper we show how this framework can be extended from loss functions of a vector of fixed dimension to loss functions of an arbitrary-length trajectory, motivated by problems in optimal control.

\subsection{Invertible Mappings, Spectral Factorization, and Generative Modelling}
Invertible mappings also play a central role in generative modelling, i.e. learning to sample vectors or signals from a complex distribution. A common approach is to sample from a simple distribution (e.g. a Gaussian) and then pass each sample through a learnt invertible transformation. As illustrated in Figure~\ref{fig:comparison}, this approach can be classified according to two dimensions: linear versus nonlinear mappings, and finite-dimensional vectors versus infinite-dimensional signals. 
In the simplest linear vector setting, the transformation is an invertible matrix \(D\) obtained by factorizing the covariance:
\begin{equation}
    \Sigma_y = DD^\top,
    \qquad
    y=Du,
    \qquad
    u\sim\mathcal N(0,I).
\end{equation}
The invertible matrix \(D\) shapes a standard Gaussian vector into a Gaussian vector with covariance \(\Sigma_y\).

The corresponding construction in the linear signal setting can be related to the classical problem of \emph{spectral factorization}, which extends covariance factorization from random vectors to stationary processes. 
For a discrete-time linear SISO system, minimum phase means that all poles and zeros lie strictly inside the unit circle \cite{OppenheimSchafer2010}. 
If the system bi-proper (relative degree zero), then its inverse is stable and causal. Let \(\yb\) be a zero-mean stationary process with rational and strictly positive-definite power spectral density (PSD) \(\bm{\Sigma}_{\yb}(z)\). 
Under standard assumptions, there exists a bi-proper minimum-phase system \(\bm{G}(z)\) such that
\begin{equation}\label{eq:spec-fact}
    \bm{\Sigma}_{\yb}(z)
    =
    \bm{G}(z)\bm{G}^*(z^{-1}).
\end{equation}
This factorization corresponds to the generative model
\begin{equation}
    \yb=\bm{G}(z)\ub,
\end{equation}
where the standard Gaussian white-noise process \(u_t\sim\mathcal N(0,I)\) is filtered through \(\bm{G}(z)\)  so that the output process \(\yb\) has the desired spectrum.

In the nonlinear vector setting, \emph{normalizing flows} replace the linear transformation \(D\) with invertible nonlinear maps of various forms. 
They transform samples from a standard Gaussian, into samples from a more complex target distribution which may be multi-modal
\cite{papamakarios2021normalizing,lipman2023flow}. 
Let \(u\sim p_u\) be a latent random vector and
\begin{equation}
    y=g(u),
\end{equation}
where \(g:\mathbb{R}^n\rightarrow\mathbb{R}^n\) is bijective and differentiable. 
The density of \(y\) is given by the change-of-variables formula
\begin{equation}\label{eq:change-of-var}
    p_y(y)=p_u\!\left(g^{-1}(y)\right)
    \left|
    \det J_{g^{-1}}(y)
    \right|.
\end{equation}
where $J_{f}$ denotes the Jacobian matrix of the mapping
$f$. Thus, normalizing flows extend invertible linear transformations to nonlinear vector mappings while accounting for the local volume change induced by the transformation. They have been applied to density estimation \cite{dinh2014nice}, image generation \cite{kingma2018glow}, audio generation \cite{prenger2019waveglow}, and motion generation \cite{luo2024potential}.

Beyond finite-dimensional random vectors, many applications require distributions over temporal signals and trajectories. Flow-based models have therefore been extended to probabilistic time-series forecasting \cite{rasulmultivariate}, continuous stochastic processes \cite{deng2020modeling}, conditional motion generation \cite{henter2020moglow}, and stable latent dynamic models \cite{urain2020imitationflow}. These works demonstrate the use of invertible models for temporal and sequential data, although the invertible transformation is typically defined on a finite-dimensional representation of a trajectory. In contrast, we propose to use a robustly invertible nonlinear dynamical system as the invertible mapping itself. The proposed BiLipREN maps a latent input signal to an output trajectory through nonlinear recurrent dynamics, and can therefore be applied to arbitrary-length signals and connects the two preceding viewpoints: Without internal dynamics, BiLipREN reduces to a static invertible map as in a normalizing flow, restricting to linear dynamics, it reduces to a minimum-phase system as in spectral factorization.

\subsection{Invertibility of nonlinear dynamical systems}
The nonlinear signal-space viewpoint above is closely related to the classical study of nonlinear system inversion. For nonlinear systems, the concept of minimum-phase is typically defined through the zero dynamics, i.e. internal dynamics of the system when the output remains identically zero. A nonlinear system is minimum-phase if its zero dynamics are stable \cite{byrnes1988local}. For non-minimum-phase systems the inverse is unstable, but factorizations based on the classical linear minimum-phase/all-pass a.k.a. inner-outer factorizations can be developed \cite{ball1992inner, van2018l2}, and approximate inversion can be achieved e.g. via finite horizon non-causal preview \cite{zou2007precision} or pseudo-inverse methods \cite{romagnoli2019general}. 

In much of the literature, invertibility of nonlinear systems is studied through a \emph{normal form} representation
\cite{hirschorn1979invertibility, devasia1996nonlinear}, i.e. a coordinate transformation that separates the system into internal dynamics and an input--output component, so that zero dynamics can be directly studied \cite{liberzon2002output}.
However, the normal form can be difficult to construct for general nonlinear systems.

Furthermore, while many existing approaches study stability of the inversion through asymptotically stable zero dynamics or input-to-state stability (ISS), they do not directly quantify the input-output gain of the inverse map from output signal perturbations to input reconstruction errors. To our knowledge,  the only works in this direction are \cite{liberzon2004output, liberzon2002output} which introduced the concept of output–input stability, however this requires a normal form construction and does not imply incremental stability as studied in this paper. 

\subsection{Contributions}
    \begin{itemize}
        \item  We introduce a definition of robust invertibility of nonlinear dynamics in terms of contraction and bi-Lipschitzness, i.e. inversion is robust to disturbances and initial condition errors. 
        \item We explicitly describe a bi-Lipschitz dynamical system using strongly input-output monotonicity (a special case of bi-Lipschitzness). By composing such models in layers with static/dynamic linear orthogonal layers, we obtain more general bi-Lipschitz model representations. 
        \item We direct parameterize such robustly invertible model by a bi-Lipschitz recurrent equilibrium network (BiLipREN), which admits an analytical inverse that is also a robustly invertible BiLipREN.
        \item We apply the proposed model to several problems in internal model control, surrogate loss learning, and generative modeling, showing the utility of robustly invertible dynamic models and illustrating the effect of various parameters.
    \end{itemize}
A preliminary paper related to this work appeared in \cite{zhang2025robustly}. The present paper provides new theoretical developments and complete proofs omitted from the conference version, together with new applications in internal model control, surrogate loss learning and generative modelling.

{\bf Notation.} Let $\mathbb{N}$ and $\mathbb{R}$ be the set of natural and real numbers, respectively. Let $\ell^n$ be the set of signals from $\mathbb{N}$ to $\R^n$. We use lower and upper case Latin letters such as $x$ and $A$ to denote vectors and matrices, respectively, and lower and upper case boldface Latin letters such as $\xb$ and $\Gb$ to denote signals and transfer matrices, respectively. We use calligraphic letters such as $\G$ to denote (nonlinear) operators between signal spaces. Let $\ell_2^n\subset \ell^n$ be the set of signals with finite $\ell_2$ norm, i.e., $\|\xb\|=\sqrt{\sum_{t=0}^{\infty} |x_t|^2}<\infty$ where $|\cdot|$ is the Euclidean norm. The notation $\langle \cdot,\cdot\rangle$ denotes the inner product $\langle x,y\rangle=x^\top y$ for $x,y\in\mathbb{R}^n$ and $\langle \xb,\yb\rangle=\sum_{t=0}^{\infty}x_t^\top y_t$ for $\xb,\yb\in \ell_2^n$. We use $\xb_{[T]}$ to denote the truncated sequence until time $T\in \mathbb{N}$, i.e., $\xb_{[T]}=(x_0,x_1,\ldots,x_T)$. We then denote the truncated norm and inner product by  $\|\xb\|_T=\|\xb_{[T]}\|$ and $\langle \xb,\yb\rangle_T=\langle \xb_{[T]},\yb_{[T]}\rangle$, respectively. We write $A(\succeq) \succ 0$ for positive (semi-)definite matrices. We use $\mathbb{D}^+$ to denote the set of positive definite diagonal matrices. 

\section{Problem Formulation}

We first recall the standard definition of invertibility of an operator acting on a vector space $\X$:

\begin{defn}[Invertibility]\label{defn:inv}
    An operator $\G:\X\mapsto\X$ is  said to be  \emph{invertible} if there exists an operator $\G^{-1}:\X\mapsto\X$ such that
    \begin{subequations}
        \begin{align}
        \G^{-1}(\G(\ub))=\ub\quad \forall \ub \in \X, \label{eq:left-inv}\\
        \G(\G^{-1}(\yb))=\yb\quad \forall \yb\in \X. \label{eq:right-inv}
    \end{align}
    \end{subequations}
    It is left or right invertible if \eqref{eq:left-inv} or \eqref{eq:right-inv} holds, respectively.
\end{defn}

In this paper, we consider signal to signal operators induced by nonlinear state space models of the form:
\begin{equation} \label{eq:system}
    x_{t+1}=f(x_t,u_t),\quad y_t=h(x_t,u_t),
\end{equation}
where $x_t\in\mathbb{R}^n$ is the state, and $u_t,y_t\in\mathbb{R}^m$ are the input and output, respectively. Here $f,h$ are locally Lipschitz functions. 

If the output map $h$ is invertible \textit{w.r.t.} the input $u$, then system \eqref{eq:system} admits a causal inverse by swapping the roles of $u$ and $y$. Specifically, the inverse system can be written as
\begin{equation}\label{eq:system-inv}
    x_{t+1}=f\bigl(x_t,h^{-1}(x_t,y_t)\bigr),\quad
    u_t=h^{-1}(x_t,y_t),
\end{equation}
where $ h^{-1}:\mathbb{R}^n\times\mathbb{R}^m\to\mathbb{R}^m$ satisfies
\begin{equation}
h\bigl(x,h^{-1}(x,y)\bigr)=y,
\quad
h^{-1}\bigl(x,h(x,u)\bigr)=u
\end{equation}
for all $x\in \R^n$ and $u,y\in \R^m$.   

We note that while such a state space model is invertible in the sense of \eqref{defn:inv}, its inverse may be unstable or highly sensitive to initial conditions or input/output perturbations. We are interested in characterizing invertible systems where bounded initial state mismatch and external perturbations lead to bounded input reconstruction errors defined by 
\begin{subequations}
    \begin{align}
        \bm{e}_u&=\G_b^{-1}(\G_a(\ub)+\Delta \yb)-\ub, \\
        \bm{e}_y&=\G_a(\G_b^{-1}(\yb)+\Delta \ub)-\yb,
    \end{align}
\end{subequations}
where $\G_a:\ub\mapsto \yb$ and $\G_b^{-1}:\yb\mapsto \ub$ denote the operators defined by system \eqref{eq:system} with $x_0=a$ and system \eqref{eq:system-inv} with $x_0=b$, respectively, see the schematic plot in Figure~\ref{fig:robust-invertible}. We formally define the notion of robust invertibility below, which reduces to Definition \ref{defn:inv} if $a=b$ and $\Delta \ub=\Delta\yb=0$.

\begin{defn}\label{def:r-invert}
    System \eqref{eq:system} is said to be \emph{robustly invertible} if it has a causal inverse \eqref{eq:system-inv}, and there exist $\lambda_{xu},\lambda_{xy}, \lambda_{yu},\lambda_{uy}\geq 0$ such that
    \begin{subequations}\label{eq:robust-invert}
    \begin{align}
       \|\bm{e}_u\|_T &\leq \lambda_{xu}|a-b|+\lambda_{yu}\|\Delta \yb\|_T\label{eq:robust-invert-a}\\
       \|\bm{e}_y\|_T &\leq\lambda_{xy}|a-b|+\lambda_{uy}\|\Delta \ub\|_T\label{eq:robust-invert-b}
   \end{align}
   \end{subequations}
   for all $\ub, \yb, \Delta\ub,\Delta\yb\in \ell^m$, initial states $a,b\in \R^n$ and $T\in \mathbb{N}$.
\end{defn}

\begin{figure}[!tb]
    \centering
    \begin{tabular}{c}
    \includegraphics[width=0.8\linewidth]{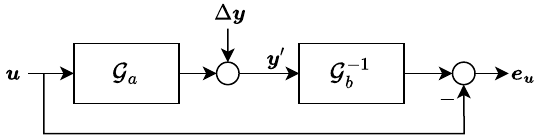}  \\
    (a) Input reconstruction for system \eqref{eq:system} via \eqref{eq:system-inv} \\
    \includegraphics[width=0.8\linewidth]{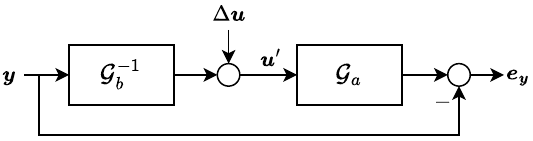} \\
    (b) Input reconstruction for system \eqref{eq:system-inv} via \eqref{eq:system}
    \end{tabular}
    \caption{Input reconstruction of system \eqref{eq:system} and its inverse \eqref{eq:system-inv} in the presence of initial state mismatch ($a\neq b$) and disturbances ($\Delta \ub, \Delta \yb\neq 0$).}
    \label{fig:robust-invertible}
\end{figure}

Our first technical problem is to characterize the conditions for robust invertibility. 
\begin{prob}
Provide tractable sufficient conditions for a  system of the form \eqref{eq:system} to be robustly invertible.
\end{prob}

\subsection{Learning Robustly Invertible State-Space Models}
We are also interested in the problem of learning robustly invertible models from a dataset $\mathfrak{D}$. Let $\mathfrak{M}:\theta \rightarrow (f_\theta,g_\theta)$ be a model parameterization, where for each $\theta \in \R^N$ we can obtain a state-space model \eqref{eq:system} with $f=f_\theta$ and $h=h_\theta$. The learning problem can be formulated as 
\begin{equation}\label{eq:learning}
    \min_{\theta \in \R^N}\; L(\theta,\mathfrak{D})
\end{equation}
where $L$ is the loss function. In the context of system identification we may have $\mathfrak{D}=(\tilde{\ub},\tilde{\yb})$ consisting of $T$-length input-output sequences and \emph{simulation error} as the loss function:
\begin{equation}
    L(\theta,\mathfrak{D})=\|\G_a(\tilde{\ub};\theta)-\tilde{\yb}\|_T^2
\end{equation}
where $\G_a(\cdot;\theta):\ub\mapsto \yb$ denotes the input-output mapping of system \eqref{eq:system} with $(f_\theta, h_\theta)$ and initial state $x_0=a$.

Another goal of this paper is a differential parameterization of robustly invertible models. 
\begin{prob}\label{prob}
    Construct a model parameterization $\mathfrak{M}:\theta\rightarrow (f_\theta, h_\theta)$ that satisfies the following requirements.
    \begin{enumerate}
    \item[\textbf{R1}] \textbf{Robust invertibility by design:} System \eqref{eq:system} defined by $(f_\theta,h_\theta)$ is robustly invertible for any $\theta\in \R^N$. 
    \item[\textbf{R2}] \textbf{Expressive:} Subject to Requirement \textbf{R1}, the model class is as flexible as possible to maximize coverage of high-performance models.
    \item[\textbf{R3}] \textbf{Differentiable:} The model parameterization $\mathfrak{M}$ is differentiable for all $\theta \in \mathbb R^N$.
 \end{enumerate}
\end{prob}
Requirements \textbf{R1} and \textbf{R3} are hard requirements. \textbf{R1} allows us to decouple robust invertibility from the choice of loss functions, datasets and optimization algorithm for the learning problem \eqref{eq:learning}. \textbf{R3} ensures that the model class is compatible with standard machine learning tools based on automatic differentiation. \textbf{R2} is a soft requirement  that the robust invertibility constraints are not overly restrictive, so high-performing models can be learned for diverse applications. 

\section{Background}

\subsection{Causal and Bi-Lipschitz Operators}
We start with some standard definitions for operators.
\begin{defn}[Causality]
An operator $\G:\ell^m\mapsto\ell^n$ is said to be \emph{causal} if for any pair of input sequences $\ub,\ub'\in \ell^m$,
\begin{equation}
    \ub'_{[T]}=\ub_{[T]}\;\Longrightarrow\; \G(\ub')_{[T]}=\G(\ub)_{[T]}
\end{equation}
for all $T\in \mathbb{N}$.
\end{defn}

\begin{defn}[Lipschitzness]
    An operator $\G:\ell^m\mapsto\ell^n$ is said to be \emph{$\nu$-Lipschitz} with $\nu>0$ if for any pair of input sequences $\ub,\ub'\in \ell^m$
    \begin{equation}\label{eq:lipschitz}
        \|\G(\ub')-\G(\ub)\|_T\leq \nu \|\ub'-\ub\|_T
    \end{equation}
   for all $T\in \mathbb{N}$. It is \emph{$\mu$-inverse Lipschitz} with $\mu>0$ if
    \begin{equation}\label{eq:inv-lipschitz}
        \|\G(\ub')-\G(\ub)\|_T\geq \mu \|\ub'-\ub\|_T.
    \end{equation}
   The operator $\G$ is said to be \emph{bi-Lipschitz} with $\nu\geq \mu>0$, or simply $(\mu,\nu)$ bi-Lipschitz, if both \eqref{eq:lipschitz} and \eqref{eq:inv-lipschitz} hold. 
\end{defn}

\begin{remark}
   Note that the \emph{composition rule} holds for invertibility, causality and bi-Lipschitz continuity. If $\G_1,\G_2:\ell^m\rightarrow\ell^m$ are invertible, then their composition  $\G=\G_2\circ \G_1$ is also invertible, with  inverse $\G^{-1}=\G_1^{-1}\circ\G_2^{-1}$. Similarly, if both $\G_1$ and $\G_2$ are casual, then $\G$ is causal. If $\G_i$ is $(\mu_i,\nu_i)$ bi-Lipschitz for $i=1,2$, then $\G$ is $(\mu,\nu)$ bi-Lipschitz with $\mu=\mu_1\mu_2$ and $\nu=\nu_1\nu_2$.  
\end{remark}

\begin{remark}
     Any finite-dimensional bi-Lipschitz map $g:\R^m\rightarrow\R^m$ has a well-defined bi-Lipschitz inverse $g^{-1}$ \cite{wangmonotone}. However, this generally does not hold for bi-Lipschitz operators $\G:\ell^m\rightarrow\ell^m$. For example, we consider the right shift operator
    \[
    \mathcal{S}_R:(u_0,u_1,\ldots)\mapsto (0, u_0,u_1,\ldots),
    \]
    which is bi-Lipschitz with $\mu=\nu=1$ but not invertible as it is not a surjective map. It does admit a left inverse, i.e., the left shift operator
    \[
    \mathcal{S}_L:(y_0,y_1,y_2\ldots)\mapsto (y_1,y_2,\ldots),
    \]
    However, this is neither causal nor bi-Lipschitz. 
\end{remark}

\subsection{Contracting and Lipschitz State-Space Models}

To characterize robust invertibility, we will use a strong notion of internal stability as follows.
\begin{defn}[Contraction]
    System \eqref{eq:system} is \emph{contracting} if for any two initial conditions $a,b\in\mathbb{R}^{n}$ and the same input signal $\ub\in \ell^m$, the corresponding state trajectories $\xb',\xb$ satisfy 
    \begin{equation}\label{eq:contracting}
        |x_t'-x_t|\leq \kappa \alpha^t |x_0'-x_0| \quad \forall t\in \mathbb{N}
    \end{equation}
    with overshoot $\kappa \geq 1$ and contraction rate $\alpha \in [0,1)$.
\end{defn}
The composition of two contracting systems is still contracting \cite{lohmiller1998contraction}. Beyond internal stability, we also need the input-output robustness described by incremental integral quadratic constraint ($\delta$IQCs) \cite{megretski1997system}.

\begin{defn}[$\delta$IQC]\label{dfn:IQC}
    System \eqref{eq:system} is said to satisfy the  \emph{incremental integral quadratic constraint} ($\delta$IQC) defined by a quadratic function $s(\Delta u, \Delta y)$, if any two trajectories $(\xb',\ub',\yb'),(\xb,\ub,\yb)$ satisfy
    \begin{equation} \label{eq:iqc}
        \sum_{t=0}^{T} 
        s(\Delta u_t,\Delta y_t)
        \ge -\kappa(x_0', x_0)
        \quad  \forall T\in\mathbb{N}
    \end{equation}
    with $\Delta y_t = y_t' - y_t$ and $\Delta u_t = u_t' - u_t$, where $\kappa(x_0', x_0) \ge 0$ and $\kappa(x_0, x_0) = 0$ for all $x_0',x_0\in\R^n$.
\end{defn}

System \eqref{eq:system} is said to be \emph{$\nu$-Lipschitz} (respectively, \emph{$\mu$-inverse Lipschitz}) if it satisfies the $\delta$IQC with $s(\Delta u,\Delta y)=\nu^2|\Delta u|^2-|\Delta y|^2$ (respectively $s(\Delta u,\Delta y)=|\Delta y|^2-\mu^2|\Delta u|^2$). And we call \eqref{eq:system} a $(\mu,\nu)$ bi-Lipschitz system with $\nu\geq \mu>0$ if it is both $\mu$-inverse Lipschitz and $\nu$-Lipschitz. 

The contraction property of system \eqref{eq:system} can be certified by an \emph{incremental Lyapunov function} function $V:\R^n\times\R^n\rightarrow \R_+$ which satisfies
\begin{gather}
    c_1|x'-x|^2\leq V(x',x)\leq c_2|x'-x|^2\quad \forall x',x\in \R^n \label{eq:uniform-bound}\\
    V(x'_{t+1},x_{t+1})-\alpha^2V(x'_t,x_t)\leq 0
\end{gather}
for some $c_2\geq c_1>0$ and $\alpha \in [0,1)$, where $\xb',\xb$ are any two state trajectories of \eqref{eq:system} under the same input $\ub$. Similarly, we can obtain \eqref{eq:iqc} via the following incremental dissipation inequality: 
\begin{equation}
    V(x_{t+1}',x_{t+1})-V(x_t',x_t)\leq s(\Delta u_t,\Delta y_t)
\end{equation}
where $V:\R^n\times\R^n\rightarrow \R_+$ is an \emph{incremental storage function} with $V(x',x)\geq 0$ and $V(x,x)=0$ for all $x',x\in \R^n$.

\section{Robustly Invertible Bi-Lipschitz Systems via Strongly Monotone and Orthogonal Layers}

In this section, we first present sufficient conditions for robust invertibility based on contraction and Lipschitz properties. We then develop a systematic approach for constructing robustly invertible state-space models through the composition simpler layers which satisfy the robust invertibility properties. Finally, we introduce a nonlinear inner-outer factorization incorporating dynamic orthogonal (all-pass/inner) layers.. 

\subsection{Robust Invertible Bi-Lipschitz Systems}

We first relate robust invertibility to contraction and Lipschitzness of system \eqref{eq:system} and its inverse \eqref{eq:system-inv}.
\begin{thm}\label{thm:robust-inv}
    Suppose that system \eqref{eq:system} is $\nu$-Lipschitz and contracting with rate $\alpha$ and overshoot $\kappa$, and its causal inverse \eqref{eq:system-inv} is $1/\mu$-Lipschitz with $\mu\le \nu$ and contracting with rate $\hat{\alpha}$ and overshoot $\hat{\kappa}$. If the output mappings $h, h^{-1}$ have Lipschitz bounds of $\gamma$ and $\hat{\gamma}$ w.r.t. $x$, respectively, then  system \eqref{eq:system} is robustly invertible with 
    \begin{equation}\label{eq:bounds}
        \begin{split}
            \lambda_{xu}=\frac{\hat{\gamma}\hat{\kappa}}{\sqrt{1-\hat{\alpha}^2}}, \quad \lambda_{xy}=\frac{\gamma\kappa}{\sqrt{1-\alpha^2}}, \quad \lambda_{yu}=\frac{1}{\mu},\quad\lambda_{uy}=\nu.
        \end{split}
    \end{equation} 
\end{thm}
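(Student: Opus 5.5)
We prove \eqref{eq:robust-invert-a} in detail; \eqref{eq:robust-invert-b} follows by the symmetric argument with the roles of \eqref{eq:system} and \eqref{eq:system-inv} interchanged. The plan is to split the reconstruction error with the triangle inequality into an initial-state mismatch part and a disturbance part. Fix $\ub$, set $\yb=\G_a(\ub)$, and let $\xb$ denote the state trajectory of \eqref{eq:system} from $x_0=a$. By exact invertibility, $\G_a^{-1}(\yb)=\ub$, where $\G_a^{-1}$ is the inverse system \eqref{eq:system-inv} started at $x_0=a$. Indeed, running \eqref{eq:system-inv} from $a$ on $\yb$ reproduces the trajectory $\xb$ and returns $u_t=h^{-1}(x_t,h(x_t,u_t))=u_t$; this can be checked by induction on $t$. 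We then write
\[
\bm e_u=\bigl[\G_b^{-1}(\yb+\Delta\yb)-\G_a^{-1}(\yb+\Delta\yb)\bigr]+\bigl[\G_a^{-1}(\yb+\Delta\yb)-\G_a^{-1}(\yb)\bigr].
\]

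The second bracket compares two trajectories of the inverse system with the same initial state $a$ and inputs differing by $\Delta\yb$. The $\delta$IQC for $1/\mu$-Lipschitzness then gives $\sum_{t=0}^T\bigl(\mu^{-2}|\Delta y_t|^2-|\Delta u_t|^2\bigr)\ge-\kappa(a,a)=0$. Hence the truncated norm of the second bracket is at most $\frac{1}{\mu}\|\Delta\yb\|_T$, which yields $\lambda_{yu}=1/\mu$.

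The first bracket compares two trajectories of \eqref{eq:system-inv} driven by the same input $\tilde{\yb}=\yb+\Delta\yb$ but started from $b$ and $a$. Call their states $\hat{x}_t'$ and $\hat{x}_t$. Contraction of the inverse gives $|\hat x_t'-\hat x_t|\le\hat\kappa\hat\alpha^t|b-a|$. The outputs are $h^{-1}(\hat x_t',\tilde y_t)-h^{-1}(\hat x_t,\tilde y_t)$, and since $h^{-1}$ is $\hat\gamma$-Lipschitz in $x$, each term is bounded by $\hat\gamma\hat\kappa\hat\alpha^t|a-b|$. Squaring and summing the geometric series gives
\[
\textstyle\sum_{t=0}^T\hat\gamma^2\hat\kappa^2\hat\alpha^{2t}|a-b|^2\le\frac{\hat\gamma^2\hat\kappa^2}{1-\hat\alpha^2}|a-b|^2,
\]
so the truncated norm of the first bracket is at most $\lambda_{xu}|a-b|$. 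Adding the two bounds through the triangle inequality for $\|\cdot\|_T$ proves \eqref{eq:robust-invert-a}.

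For \eqref{eq:robust-invert-b}, set $\ub=\G_b^{-1}(\yb)$, so that $\G_b(\ub)=\yb$ by the same exact-inverse argument. We decompose $\bm e_y$ into $\G_a(\ub+\Delta\ub)-\G_b(\ub+\Delta\ub)$ plus $\G_b(\ub+\Delta\ub)-\G_b(\ub)$. The first term is handled by contraction of \eqref{eq:system} together with the $\gamma$-Lipschitzness of $h$ in $x$, giving $\lambda_{xy}$. The second term is handled by the $\nu$-Lipschitz $\delta$IQC with equal initial states, giving $\lambda_{uy}=\nu$.

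The only delicate step is justifying that the exact inverse from a matched initial state reproduces the signal, so that the decomposition is valid. This is a direct induction using the identities $h(x,h^{-1}(x,y))=y$ and $h^{-1}(x,h(x,u))=u$. The condition $\mu\le\nu$ is not needed for these bounds.
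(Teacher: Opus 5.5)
Your proof is correct and follows essentially the same route as the paper: a triangle-inequality split of $\bm e_u$ into a disturbance term, bounded by the $1/\mu$-Lipschitz $\delta$IQC with matched initial states, and an initial-state-mismatch term, bounded via contraction of the inverse, the $\hat\gamma$-Lipschitzness of $h^{-1}$ in $x$, and a geometric sum (and symmetrically for $\bm e_y$). The only difference is cosmetic: the paper pivots through $\G_b^{-1}(\yb)$, handling the disturbance from initial state $b$ and the mismatch under the clean input $\yb$, whereas you pivot through $\G_a^{-1}(\yb+\Delta\yb)$; this works equally well because both properties hold for arbitrary inputs and initial states.
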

\begin{remark}
    If a system is $\nu$-Lipschitz and it has well-defined inverse which is $1/\mu$-Lipschitz, then that system is $(\mu,\nu)$ bi-Lipschitz. 
\end{remark}
\begin{proof}
    By letting $\yb=\G_a(\ub)$, we have
    \begin{equation}\label{eq:bound-eu}
        \begin{split}
            \|\bm{e}_u\|_T=&\|\G_b^{-1}(\yb+\Delta \yb)-\ub\|_T \\
            \le & \|\G_b^{-1}(\yb+\Delta \yb)-\G_b^{-1}(\yb)\|+\|\G_b^{-1}(\yb)-\ub\|_T \\
            \le & \frac{1}{\mu}\|\Delta \yb\|_T+\|\G_b^{-1}(\yb)-\ub\|_T \\
            =& \frac{1}{\mu}\|\Delta \yb\|_T+\|\G_b^{-1}(\yb)-\G_a^{-1}(\yb)\|_T
        \end{split}
    \end{equation}
    where the second inequality follows as \eqref{eq:system-inv} is $1/\mu$-Lipschitz. Moreover, since it is also contracting, we can obtain
    \begin{equation}\label{eq:bound-ex}
        \begin{split}
            \|\G_b^{-1}(\yb)&-\G_a^{-1}(\yb)\|_T \leq \hat{\gamma}\|\xb^b-\xb^a\|_T \\
            &\leq \sqrt{\sum_{t=0}^T(\hat{\kappa}\hat{\alpha}^{t}|a-b|)^2}\leq \frac{\hat{\kappa}}{\sqrt{1-\hat{\alpha}^2}}|a-b|
        \end{split}
    \end{equation}
    where $\xb^a,\xb^b$ are the state trajectories of $\Gb_a^{-1}(\yb),\Gb_b^{-1}(\yb)$, respectively. Combining \eqref{eq:bound-eu} and \eqref{eq:bound-ex} yields the input reconstruction error bound \eqref{eq:robust-invert-a} for \eqref{eq:system}. We can then establish the bound \eqref{eq:robust-invert-b} for \eqref{eq:system-inv} by the same procedure.
\end{proof}

\begin{corollary}\label{coro:1}
    Let $\G_1,\G_2, ..., \G_K$ be dynamical systems with state-space realization of \eqref{eq:system}. If $\G_i$ are invertible for all $i=1, ..., K$, then 
    \begin{equation}
\G=\G_K\circ\cdots\circ\G_2\circ\G_1,
\end{equation}
is also robustly invertible.
\end{corollary}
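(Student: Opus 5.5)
The corollary says ``invertible'', but it is only meaningful if each $\G_i$ is \emph{robustly} invertible in the sense of Theorem~\ref{thm:robust-inv}. So I will read the hypothesis as: each $\G_i$ is $\nu_i$-Lipschitz and contracting (rate $\alpha_i$, overshoot $\kappa_i$). Its causal inverse $\G_i^{-1}$, of the form \eqref{eq:system-inv}, is $1/\mu_i$-Lipschitz and contracting (rate $\hat\alpha_i$, overshoot $\hat\kappa_i$). The output maps $h_i, h_i^{-1}$ are Lipschitz in the state, with constants $\gamma_i,\hat\gamma_i$.

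The plan is to show that the series interconnection $\G$ again satisfies all hypotheses of Theorem~\ref{thm:robust-inv}, and then invoke that theorem. I proceed by induction on $K$, so it suffices to treat $K=2$.

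\textbf{Step 1: state-space form and inverse.} Stack the states as $x=(x^1,x^2)$. The composite system is
\[
x^1_{t+1}=f_1(x^1_t,u_t),\qquad
x^2_{t+1}=f_2\bigl(x^2_t,h_1(x^1_t,u_t)\bigr),\qquad
y_t=h_2\bigl(x^2_t,h_1(x^1_t,u_t)\bigr).
\]
This is of the form \eqref{eq:system}. Its output map is invertible in $u$, with
\[
h^{-1}(x,y)=h_1^{-1}\bigl(x^1,h_2^{-1}(x^2,y)\bigr).
\]
Hence the causal inverse \eqref{eq:system-inv} exists, and it is exactly the series interconnection $\G_1^{-1}\circ\G_2^{-1}$.

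The Lipschitz constant of $h$ with respect to $x$ follows from the chain rule:
\[
\gamma\le \gamma_2+\mathrm{Lip}_u(h_2)\,\gamma_1 .
\]
This needs Lipschitzness of $h_2$ in its input argument. I will take it from local Lipschitzness, or more cleanly from the incremental Lipschitz bound at $t=0$ with equal states, which gives $|\Delta y_0|\le\nu_2|\Delta u_0|$ and hence $\mathrm{Lip}_u(h_2)\le\nu_2$. The same argument gives $\hat\gamma\le\hat\gamma_1+\mu_1^{-1}\hat\gamma_2$.

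\textbf{Step 2: Lipschitz bounds.} These follow from the composition rule in the Remark after the bi-Lipschitz definition. The forward system $\G$ is $\nu_1\nu_2$-Lipschitz and the inverse is $1/(\mu_1\mu_2)$-Lipschitz.

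At the level of $\delta$IQCs, the $\kappa$-offset terms need care. For the $\G_1$ dissipation inequality, apply the bound $\|\Delta y^1\|_T^2\le \nu_1^2\|\Delta u\|_T^2+\kappa_1(\cdot)$. Then chain it into $\G_2$'s inequality, scaling $\G_1$'s by $\nu_2^2$. Summing the storage functions, $V=\nu_2^2V_1+V_2$, gives an incremental storage function for the composite with
\[
s=\nu_1^2\nu_2^2|\Delta u|^2-|\Delta y|^2 .
\]
The inverse is handled analogously.

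\textbf{Step 3: contraction of the cascade.} This is the step I expect to be the main obstacle, because the overshoot and rate must be computed explicitly rather than just cited from \cite{lohmiller1998contraction}.

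Fix the input $\ub$ and two initial states. The bound $|\Delta x^1_t|\le\kappa_1\alpha_1^t|\Delta x^1_0|$ is immediate. For $\Delta x^2$, note that the intermediate signal $\bm y^1$ differs between the two trajectories, so $\G_2$ is driven by different inputs. My first attempt is a Lyapunov-based argument using $V_2$ with a disturbance term. Concretely, I will assume the contraction of each layer is certified by incremental Lyapunov functions satisfying \eqref{eq:uniform-bound}, and that $f_2$ is Lipschitz in $u$ (constant $L$). Then
\[
\sqrt{V_2(t+1)}\le\alpha_2\sqrt{V_2(t)}+\sqrt{c_2}\,L\,|\Delta y^1_t|,
\qquad
|\Delta y^1_t|\le\gamma_1\kappa_1\alpha_1^t|\Delta x^1_0| .
\]
Summing this convolution of two geometric sequences yields exponential decay with rate $\alpha=\max(\alpha_1,\alpha_2)+\epsilon$ and some finite $\kappa$.

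Alternatively, the cheaper route is a weighted sum $V=\beta V_1+V_2$ with $\beta$ large, which gives a direct incremental Lyapunov function for the cascade. The inverse cascade is treated identically.

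With Steps 1–3 in hand, Theorem~\ref{thm:robust-inv} applies to $\G$, and induction over $K$ finishes the proof.
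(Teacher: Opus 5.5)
Your proposal follows the same route as the paper's proof: show that causality, the causal inverse, the Lipschitz bounds and contraction each survive series composition, then invoke Theorem~\ref{thm:robust-inv}. It is more careful than the paper's one-line argument, for instance in tracking the state-Lipschitz constants $\gamma,\hat\gamma$ of the composite output maps, which the paper never mentions.

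The coupling hypothesis you add in Step~3 is genuinely needed for cascade contraction. That hypothesis is that $f_2$ is globally Lipschitz in its input, with a quadratic $V_2$ so that $\sqrt{V_2}$ obeys a triangle inequality. Without it the cascade can fail to be contracting: take $x^1_{t+1}=0.5x^1_t$, $y^1_t=x^1_t+u_t$ feeding $x^2_{t+1}=0.5x^2_t+w_t^3$, $y_t=w_t$. Both layers meet the hypotheses of Theorem~\ref{thm:robust-inv}, but the cascade admits no uniform overshoot. The paper's appeal to \cite{lohmiller1998contraction} silently assumes exactly this kind of coupling bound.

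If you want to avoid the extra hypothesis, the bounds \eqref{eq:robust-invert} chain directly through the layers. Apply \eqref{eq:robust-invert-a} to $\G_2$ to bound the intermediate error, then apply it to $\G_1$ with that error as $\Delta\yb$. This gives $\lambda_{yu}=\lambda^{(1)}_{yu}\lambda^{(2)}_{yu}$ and $\lambda_{xu}=\lambda^{(1)}_{xu}+\lambda^{(1)}_{yu}\lambda^{(2)}_{xu}$, with no contraction argument for the composite.
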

\begin{proof}
    Since invertibility, causality, contraction, and Lipschitz continuity are all preserved under composition, robust invertibility is likewise preserved under composition.
\end{proof}
The above corollary enables the construction of deep robustly invertible models by composing simple robustly invertible layers. In Sections~\ref{sec:siom} and \ref{sec:dynamic-orthogonal} below, we introduce two such building blocks: a contracting and strongly I/O monotone dynamical layer and a static orthogonal layer \cite{trockman2021orthogonalizing}.

\subsection{Contracting and Strongly I/O Monotone Systems}\label{sec:siom}

We first introduce the concept of strongly input-output monotone operators, similar to the classical notion of incremental passivity \cite{desoer2009feedback}. 
\begin{defn}\label{dfn:monotone}
    An operator $\M:\ell^m\mapsto \ell^m$ is said to be \emph{strongly input-output monotone} with $\sigma, \eta>0$ and $\sigma\eta\leq 1$, or simply $(\sigma,\eta)$-strongly I/O monotone, if
    \begin{equation}\label{eq:in_mono}
       2 \langle \Delta \yb , \Delta \ub \rangle_T \geq \sigma\|\Delta \ub\|^2_T+\eta\|\Delta \yb \|^2_T\quad 
    \end{equation}
    for all $\ub,\ub'\in\ell^m$ and $T\in \mathbb{N}$, where $\Delta \yb =\M(\ub')-\M(\ub)$ and $\Delta \ub = \ub'-\ub$. 
\end{defn}
\begin{remark}
Note that $\sigma\eta\leq 1$ is necessary for \eqref{eq:in_mono} since 
\[
2\|\Delta \yb\|_T\|\Delta \ub\|_T\geq 2 \langle \Delta \yb , \Delta \ub \rangle_T\geq 2\sqrt{\sigma\eta} \|\Delta \yb\|_T\|\Delta \ub\|_T.
\]
\end{remark}

\begin{lem}\label{lem:iqc}
    Let $\M$ be a ($\sigma, \eta$)-strongly I/O monotone operator.
    \begin{enumerate}
    \item $\M$ is $(\mu,\nu)$-Lipschitz with
    \begin{equation}\label{eq:uv}
        \mu=\frac{1-\sqrt{1-\sigma\eta}}{\eta},\quad \nu=\frac{1+\sqrt{1-\sigma\eta}}{\eta}.
    \end{equation}
        \item $\M$ is invertible and $\M^{-1}$ is $(\eta,\sigma)$-strongly I/O monotone.
    \end{enumerate}
\end{lem}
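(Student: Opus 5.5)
For part 1, I will turn the monotonicity inequality \eqref{eq:in_mono} into a scalar quadratic inequality in the ratio of norms. Fix $T$ and write $a=\|\Delta\ub\|_T$ and $b=\|\Delta\yb\|_T$. Cauchy--Schwarz on the truncated inner product gives $2\langle\Delta\yb,\Delta\ub\rangle_T\le 2ab$, so \eqref{eq:in_mono} yields
\[
\eta b^2-2ab+\sigma a^2\le 0 .
\]
If $a=0$ this forces $b=0$, and both bounds hold trivially. Otherwise I divide by $a^2$ and set $r=b/a$, which gives $\eta r^2-2r+\sigma\le 0$. The roots of this quadratic are $(1\pm\sqrt{1-\sigma\eta})/\eta$, which are real because $\sigma\eta\le 1$. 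Hence $r$ lies between the two roots, and that is exactly $\mu a\le b\le \nu a$ with $\mu,\nu$ as in \eqref{eq:uv}. Since this holds for every $T$, $\M$ is $(\mu,\nu)$ bi-Lipschitz.

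For part 2, the real content is existence of the inverse, i.e.\ surjectivity onto $\ell^m$ together with causality. Injectivity already follows from the lower bound $\mu>0$. Because the setting is signals on $\mathbb N$ and $\M$ is causal in context (it is induced by \eqref{eq:system}), I will build the inverse time step by step. Suppose $u_0,\dots,u_{t-1}$ have been chosen so that $\M(\ub)_{[t-1]}=\yb_{[t-1]}$. By causality, the output at time $t$, viewed as a function of the next input sample $v=u_t$, is a map $\phi_t:\R^m\to\R^m$.

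Next I apply \eqref{eq:in_mono} with horizon $t$ to two input sequences that agree up to time $t-1$ and differ only at time $t$. All earlier increments vanish, so the inequality reduces to
\[
2\langle\phi_t(v')-\phi_t(v),v'-v\rangle\ge\sigma|v'-v|^2+\eta|\phi_t(v')-\phi_t(v)|^2 .
\]
This says $\phi_t$ is strongly monotone and Lipschitz on $\R^m$. A continuous strongly monotone map on $\R^m$ is a bijection, by the Browder--Minty theorem or a Banach fixed-point argument on $v\mapsto v-\beta(\phi_t(v)-y_t)$ for small $\beta>0$. So I can solve $\phi_t(u_t)=y_t$ uniquely. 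Induction then gives a unique $\ub$ with $\M(\ub)=\yb$, and the resulting map $\yb\mapsto\ub$ is causal by construction. I expect this per-step solvability, including the care needed to justify that the map is well defined via causality, to be the main obstacle.

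For the state-space case \eqref{eq:system}, the argument says more concretely that $h(x,\cdot)$ is invertible, which matches \eqref{eq:system-inv}. Finally, the monotonicity of the inverse is immediate. For $\yb,\yb'$, set $\ub=\M^{-1}(\yb)$ and $\ub'=\M^{-1}(\yb')$. Then \eqref{eq:in_mono} applied to $\ub,\ub'$ reads $2\langle\Delta\ub,\Delta\yb\rangle_T\ge\eta\|\Delta\yb\|_T^2+\sigma\|\Delta\ub\|_T^2$. With the roles of input and output swapped, this is precisely $(\eta,\sigma)$-strong I/O monotonicity of $\M^{-1}$.
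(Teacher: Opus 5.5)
Your Part 1 is the paper's argument: Cauchy--Schwarz, then placing $\|\Delta\yb\|_T/\|\Delta\ub\|_T$ between the roots, which the paper writes as $(\|\Delta\yb\|_T-\mu\|\Delta\ub\|_T)(\nu\|\Delta\ub\|_T-\|\Delta\yb\|_T)\ge 0$. Your final role swap for $\M^{-1}$ is also the paper's.

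Part 2 is correct but takes a different route. The paper deduces $\langle\Delta\yb,\Delta\ub\rangle_T\ge\frac{\sigma}{2}\|\Delta\ub\|_T^2$ and invokes the Browder--Minty theorem on the signal space to get surjectivity. You instead reduce surjectivity to a sequence of finite-dimensional equations $\phi_t(u_t)=y_t$, each with a strongly monotone, Lipschitz $\phi_t$.

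Your route is more elementary and more careful. The paper's appeal to Browder--Minty implicitly requires a Hilbert-space setting such as $\ell_2^m$ with coercivity. But $\ell^m$ is not a normed space, and its signals need not be square-summable. Your induction gives surjectivity onto all of $\ell^m$ directly. It also shows the inverse is causal, which is what \eqref{eq:system-inv} needs, and it mirrors the per-step reasoning the paper uses for $h(a,\cdot)$ in the proof of Theorem~\ref{thm:monotone-system}. The paper's route is shorter and never needs causality.

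One thing to tighten: you do not need to import causality ``from context.'' Take $\ub_{[T]}=\ub'_{[T]}$ in \eqref{eq:in_mono}. Then the inner product and $\|\Delta\ub\|_T$ vanish, leaving $\eta\|\Delta\yb\|_T^2\le 0$. Since $\eta>0$, strong I/O monotonicity alone implies causality, so your argument proves the lemma for an arbitrary operator on $\ell^m$, as stated. For the per-step fixed point, the choice $\beta=\eta$ works concretely: it gives $|\Delta v-\eta\,\Delta\phi_t|^2\le(1-\sigma\eta)|\Delta v|^2$, a contraction with factor $\sqrt{1-\sigma\eta}<1$.
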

\begin{proof}
Claim 1): From \eqref{eq:uv} we have the relationship 
\begin{equation}\label{eq:monotone-bound}
    \sigma = \frac{2\mu\nu}{\mu+\nu}, \quad \eta = \frac{2}{\mu+\nu}.
\end{equation}
By applying Cauchy-Schwarz inequality to  \eqref{eq:in_mono} we obtain 
\begin{equation*}
    \|\Delta \yb\|_T\|\Delta \ub\|_T\geq  \langle \Delta \yb , \Delta \ub \rangle_T\geq \frac{1}{\mu+\nu} (\mu\nu\|\Delta \ub\|_T^2+\|\Delta \yb\|_T^2)
\end{equation*}
which further implies
\begin{equation}\label{eq:ginv-bilip}
    \left(\|\Delta \yb\|_T-\mu \|\Delta \ub\|_T\right)\left(\nu\|\Delta \ub\|_T-\|\Delta \yb\|_T\right)\geq 0.
\end{equation}
Thus, $\M$ is $(\mu,\nu)$ bi-Lipschitz. 

Claim 2): Since $\M$ is $\mu$ inverse Lipschitz, then we have 
\[
\Delta \yb=0 \; \Longrightarrow\; \|\Delta \ub\|_T \leq \frac{1}{\mu}\|\Delta \yb\|_T=0,
\]
i.e., $\M$ is a injective mapping. Moreover, \eqref{eq:in_mono} implies
\begin{equation}\label{eq:monotone}
    \langle \Delta \yb , \Delta \ub \rangle_T \geq \frac{\sigma}{2}\|\Delta \ub\|^2_T.
\end{equation}
By Browder-Minty theorem \cite{zeidler1986nonlinearI}, $\M$ is also a surjective mapping. Thus, $\M$ is invertible. By swapping the roles of $u$ and $y$ in \eqref{eq:in_mono}, we can obtain that $\M^{-1}$ is $(\eta,\sigma)$-strongly I/O monotone. 
\end{proof}

\begin{remark}
Strongly I/O monotone operators constitute a well-behaved subclass of bi-Lipschitz operators: they are invertible, and their inverses are also strongly I/O monotone. However, strongly I/O monotonicity is not preserved under composition. Given two strongly I/O monotone $\M_1, \M_2$, the composition $\M=\M_2\circ\M_1$ does not need to be strongly I/O  monotone. Nevertheless, $\M$ remains bi-Lipschitz and invertible. 
\end{remark}

\begin{figure}[!bt]
\centering
    \includegraphics[width=0.7\linewidth]{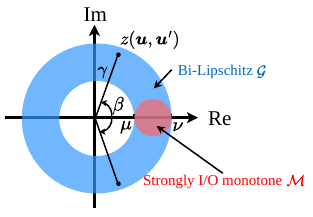}
    \caption{SRG of bi-Lipschitz and strongly input-output monotone operators with the same $\mu,\nu$ parameters. }\label{fig:srg} 
\end{figure}

We can visualize the relationship with a scaled relative graph (SRG), a graphical tool for analyzing operators \cite{ryu2022scaled}. Given an operator $\G$, its SRG is defined as a set of complex numbers $z(\ub,\ub')=\gamma e^{\pm j\beta}$ with 
\begin{equation}
\gamma =\frac{\|\Delta \yb\|}{\|\Delta \ub\|},\quad \beta = \arccos\frac {\langle \Delta\ub , \Delta\yb \rangle}{\|\Delta\ub\| \|\Delta\yb\|},
\end{equation} 
where $\Delta \yb =\G(\ub')-\G(\ub)$ and $\Delta \ub = \ub'-\ub \in \ell_2^m$. For dynamical systems, the SRG can be viewed as a generalization of the classical Nyquist plot of a linear system to much broader classes of nonlinear systems \cite{chaffey2023graphical}. Simple algebraic manipulation of \eqref{eq:in_mono} yields that the SRG of a $(\sigma,\eta)$ strongly I/O monotone operator $\M$ is a disk described by
\begin{equation}\label{eq:monotone-srg}
    \left|z(\ub,\ub')-\frac{\mu+\nu}{2} \right|\leq \frac{\nu-\mu}{2}
\end{equation}
where $\mu,\nu$ are given by \eqref{eq:uv}. Hence, it is a convex subset of the SRG of a $(\mu,\nu)$ bi-Lipschitz operator $\G$, see Figure~\ref{fig:srg}. 

We now focus on the operators realized by the state-space model \eqref{eq:system} which satisfy both strong I/O monotonicity property and internal stability of the system, characterized in terms of contraction.

\begin{thm}\label{thm:monotone-system}
    Consider the nonlinear state-space model \eqref{eq:system}. If there exists a uniformly-bounded function $V(x',x)$ such that
    \begin{equation}\label{eq:strict-disspative}
    V(x'_{t+1},x_{t+1})-\alpha^2V(x'_t,x_t)\leq s(\Delta u_t,\Delta y_t)
    \end{equation}
    for some $\alpha \in [0, 1)$, where 
    \begin{equation}\label{eq:io-supply}
        s(\Delta u,\Delta y)=2\langle \Delta y,\Delta u\rangle-\sigma |\Delta u|^2-\eta |\Delta y|^2
    \end{equation}
    with $\sigma\eta\leq 1$ and $\sigma,\eta>0$, then \eqref{eq:system} is contracting and $(\sigma,\eta) $-strongly I/O monotone. Moreover, it has a causal inverse \eqref{eq:system-inv}, which is  contracting and $(\eta,\sigma)$ strongly I/O monotone.
\end{thm}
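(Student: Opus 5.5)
Starting from the dissipation inequality \eqref{eq:strict-disspative}, I will establish four things in turn: contraction, strong I/O monotonicity, existence of a causal inverse, and the corresponding properties of that inverse.

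\emph{Contraction.} Fix any two trajectories driven by the same input, so that $\Delta u_t=0$. Then \eqref{eq:io-supply} gives $s(0,\Delta y_t)=-\eta|\Delta y_t|^2\le 0$, and \eqref{eq:strict-disspative} reduces to $V(x'_{t+1},x_{t+1})\le \alpha^2 V(x'_t,x_t)$. Iterating and using the uniform bounds \eqref{eq:uniform-bound} yields $c_1|\Delta x_t|^2\le \alpha^{2t}c_2|\Delta x_0|^2$. This is \eqref{eq:contracting} with $\kappa=\sqrt{c_2/c_1}$.

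\emph{Strong I/O monotonicity.} Since $V\ge 0$ and $\alpha^2\le 1$, we have $V(x'_{t+1},x_{t+1})-V(x'_t,x_t)\le V(x'_{t+1},x_{t+1})-\alpha^2V(x'_t,x_t)\le s(\Delta u_t,\Delta y_t)$. Summing from $t=0$ to $T$ gives $\sum_{t=0}^T s(\Delta u_t,\Delta y_t)\ge -V(x'_0,x_0)$, which is exactly the $\delta$IQC of Definition~\ref{dfn:IQC}. For the operator $\G_a$ with a common initial state $x'_0=x_0=a$, we have $V(a,a)=0$ (the upper bound in \eqref{eq:uniform-bound} forces this). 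Hence \eqref{eq:in_mono} holds for every $T$, and $\G_a$ is $(\sigma,\eta)$-strongly I/O monotone.

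\emph{Existence of the causal inverse.} This is the main obstacle: I must show that $h(x,\cdot)$ is invertible for each $x$, so that \eqref{eq:system-inv} is well defined. My first approach is to specialise the dissipation inequality to a single step with equal states $x'_t=x_t=x$. Then $V(x,x)=0$ and
\[
0\le V(f(x,u'),f(x,u))\le s(u'-u,\,h(x,u')-h(x,u)).
\]
This shows that for each fixed $x$ the static map $u\mapsto h(x,u)$ satisfies the pointwise strong monotonicity $2\langle\Delta y,\Delta u\rangle\ge\sigma|\Delta u|^2+\eta|\Delta y|^2$. By Lemma~\ref{lem:iqc} applied in $\R^m$ (or directly by Browder--Minty, given continuity and strong monotonicity with modulus $\sigma/2$), $h(x,\cdot)$ is a bijection on $\R^m$. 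Its inverse $h^{-1}(x,\cdot)$ is also strongly monotone with constants $(\eta,\sigma)$ and is Lipschitz in $y$. Lipschitzness in $x$ then follows by an implicit-function-type argument from the local Lipschitzness of $h$. Consequently \eqref{eq:system-inv} is a well-defined causal state-space system, and it realises $\G_a^{-1}$.

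\emph{Properties of the inverse.} Every trajectory of \eqref{eq:system-inv} is a trajectory of \eqref{eq:system} with the roles of $u$ and $y$ swapped, and it shares the same state. The same function $V$ therefore satisfies \eqref{eq:strict-disspative} with supply $2\langle\Delta u,\Delta y\rangle-\eta|\Delta y|^2-\sigma|\Delta u|^2$. Viewed with input $y$ and output $u$, this is the supply \eqref{eq:io-supply} with $\sigma$ and $\eta$ interchanged. Repeating the contraction and monotonicity arguments above, now with $\Delta y_t=0$ so that the supply is $-\sigma|\Delta u_t|^2\le 0$, shows that \eqref{eq:system-inv} is contracting and $(\eta,\sigma)$-strongly I/O monotone.
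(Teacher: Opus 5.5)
Your proposal is correct and follows the paper's proof essentially step for step. Contraction comes from setting $\Delta u_t=0$, strong I/O monotonicity from the $\delta$IQC implied by the dissipation inequality, and invertibility of $h(x,\cdot)$ from the one-step inequality with equal initial states combined with the argument of Lemma~\ref{lem:iqc}. The inverse's properties follow because \eqref{eq:system} and \eqref{eq:system-inv} share the same solution set. You only make explicit details the paper leaves implicit, such as the overshoot $\kappa=\sqrt{c_2/c_1}$ and the telescoping step using $V\geq 0$ and $\alpha\leq 1$.
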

\begin{remark}
The benefit of the above condition is that a \textit{single} dissipation inequality \eqref{eq:strict-disspative} simultaneously verifies that the forward model \eqref{eq:system} is contracting and bi-Lipschitz \textit{and} that the inverse model \eqref{eq:system-inv} is well-posed, contracting, and bi-Lipschitz. This will enable a differentiable parameterization of robustly invertible models in the form of a REN \cite{revay2024recurrent} in Section \ref{sec:bilipren}.     
\end{remark}
\begin{proof}
First, by taking $\Delta u_t\equiv 0$, we obtain 
\[
V(x'_{t+1},x_{t+1})-\alpha^2V(x'_t,x_t)\leq -\eta |\Delta y_t|^2\leq 0
\]
which implies that system \eqref{eq:system} is contracting. Eq.~\eqref{eq:strict-disspative} implies that system \eqref{eq:system} satisfies the $\delta$IQC defined by \eqref{eq:io-supply}, i.e., it is $(\sigma,\eta)$ strongly I/O monotone.

Second, we show that the output mapping $h$ is invertible {\it w.r.t.} the system input $u$. For any $a\in \R^n$, we take $x_0'=x_0=a$ for \eqref{eq:strict-disspative}, which leads to 
\begin{equation}
    2\langle \Delta y,\Delta u\rangle-\sigma |\Delta u|^2-\eta |\Delta y|^2 \geq V(x_1',x_1)\geq 0
\end{equation}
where $\Delta u=u_0'-u_0$ and $\Delta y=h(a,u_0')-h(a,u_0)$. By following the proof of Lemma~\ref{lem:iqc} we can conclude that $h(a,\cdot):\R^m\rightarrow \R^m$ is $(\mu,\nu)$ bi-Lipschitz for $a\in \R^n$, implying that $h$ is invertible {\it w.r.t.} $u$. Therefore, a causal inverse of system \eqref{eq:system} can be obtained via \eqref{eq:system-inv}.

Finally, we show the stability properties of \eqref{eq:system-inv}. 
Since \eqref{eq:system} and \eqref{eq:system-inv} share the same set of solutions $(\xb,\ub,\yb)$, the incremental dissipative inequality \eqref{eq:strict-disspative} holds for \eqref{eq:system-inv} with $y$ as the input and $u$ as the output. Similarly, we have that system \eqref{eq:system-inv} is contracting and $(\eta,\sigma)$-strongly I/O monotone.
\end{proof}

\subsection{Static and Dynamic Orthogonal Layers and Nonlinear Inner-Outer Factorization}\label{sec:dynamic-orthogonal}

Another basic layer with the robust invertibility property is the \textit{static orthogonal layer} $\mathcal{O}:\R^m\rightarrow\R^m$ \cite{trockman2021orthogonalizing}, i.e. an affine map of the form
\begin{equation}
    \mathcal{O}(u)=P u+q
\end{equation}
where $P\in \R^{m\times m}$ is an orthogonal weight matrix ($P^\top P=PP^\top=I$), $q\in \R^m$ is a bias term. It is easy to verify that $\Oc$ is bi-Lipschitz with $\mu=\nu=1$ and its inverse is also an orthogonal layer:
\begin{equation}
    \Oc^{-1}(y) = P^\top(y-q).
\end{equation}

We also introduce the \textit{dynamic orthogonal layer}, which is not robustly invertible in the above sense but is bi-Lipschitz and is stable in reverse time. A dynamic orthogonal layer is an affine system $\bm{O}: \ell^m\mapsto \ell^m $ of the form
\begin{equation}\label{eq:lti-orth}
    \Ob:\;\left\{\begin{aligned}
        x_{t+1} &= Ax_t+Bu_t+b_x\\
        y_t &=Cx_t+Du_t+b_y
    \end{aligned}\right.
\end{equation}
with $x_t\in \R^n$ and $ u_t,y_t\in \R^m$, where 
\begin{equation}
Q=\begin{bmatrix}
A & B \\ C & D
\end{bmatrix}
\end{equation}
is orthogonal, and $b_x\in \R^n,b_y\in \R^m$ are bias terms. If we omit bias terms, the associated transfer function from $\Delta \ub$ to $\Delta \yb$ is $\Ob(z)= C(zI-A)^{-1}B+D$, which is an \emph{all-pass filter} (see e.g. \cite{OppenheimSchafer2010, heuberger2005modelling}) since
\begin{equation}\label{eq:orth-dc}
    \Ob\bigl(z^{-1}\bigr)^\top \Ob(z)=I.
\end{equation}
Furthermore, since $Q$ is an orthogonal matrix, the system satisfies the equality $|x_{t+1}|^2 - |x_t|^2 = |u_t|^2 - |y_t|^2$, from which it follows that \eqref{eq:lti-orth} is stable and $(1,1)$ bi-Lipschitz, and also that its inverse is unstable, whereas its anti-causal inverse is stable in reverse time:
\cite{heuberger2005modelling}:
\begin{equation}\label{eq:dyn-orth-inv}
\begin{bmatrix}
    x_t \\ u_t
\end{bmatrix}=
\begin{bmatrix}
    A & B \\ C & D
\end{bmatrix}^\top 
\begin{bmatrix}
    x_{t+1}-b_x \\ y_t -b_y
\end{bmatrix}.
\end{equation}
Thus, the dynamic orthogonal layer can also be utilized to construct models which have a non-causal robust inverse over fixed-length (batch) data.

The composition of an all-pass filter and a robustly invertible nonlinear system can be considered as a form of inner-outer factorization of a stable nonlinear system \cite{ball1992inner}, generalizing the classical minimum-phase/all-pass factorization of stable linear systems.

In classical inner-outer factorization, a model is given and the inner and outer factors are computed. In this paper, we propose to learn models from data in an ``already factored'' form:
\begin{equation}\label{eq:inner-outer}
    \mathcal{P}=\Ob(z)\circ \G
\end{equation}
where $\Ob(z)$ is a dynamic orthogonal layer (the all-pass a.k.a. inner factor), $\G$ is a robustly invertible bi-Lipschitz system (the minimum-phase a.k.a. outer factor). 

Although $\mathcal{P}$ is contracting and bi-Lipschitz, its inverse $\mathcal{P}^{-1}$ is unstable. However, for batch data the robustly invertible layer $\mathcal G$ can be inverted forwards in time, and the orthogonal layer $\bm{O}(z)$ can be inverted in reverse time.  Furthermore, to recover the steady-state input, we introduce the approximate inverse:
\begin{equation}
    \mathcal{P}^{\sharp}=\G^{-1}\circ \Ob(1)^\top,
\end{equation}
where the reconstruction error $\bm{e}_u=\mathcal{P}\circ\mathcal{P}^{\sharp}(\ub)-\ub$  converges to zero exponentially for any constant input $u_t\equiv u$. 

\begin{remark}
In our proposed framework \eqref{eq:inner-outer}, the all-pass/inner factor is linear (or affine with bias terms), whereas in previous approaches to nonlinear inner-outer factorization the inner factor is energy-preserving ($\|\yb\|=\|\ub\|$) but may be nonlinear (e.g. \cite{ball1992inner, van2018l2}). It is reasonable to ask whether in the incremental setting the inner factor could also be nonlinear. However, the Mazur–Ulam theorem \cite{mazur1932transformations, nica2012mazur} states that on any real normed linear space (even infinite dimensional), every invertible distance-preserving map is affine. So in this sense, there is no loss of generality from restricting to affine models.
\end{remark}

\section{Parameterization of Bi-Lipschitz Recurrent Equilibrium Networks}\label{sec:bilipren}

In this section, we introduce a differentiable parameterization of the Bi-Lipschitz REN (\emph{BiLipREN}), which is a robustly invertible bi-Lipschitz state-space model in the form of a \emph{Recurrent Equilibrium Network} (REN) \cite{revay2024recurrent}. Furthermore, BiLipREN admits an analytical inverse which is also a BiLipREN.

\subsection{Recurrent Equilibrium Networks}
REN is a neural dynamic model which is formulated as a feedback interconnection of a linear time-invariant (LTI) system and a static activation:
\begin{subequations}\label{eq:ren}
\begin{align}
\renewcommand\arraystretch{1.2}
\begin{bmatrix}
x_{t+1} \\ v_t \\ y_t
\end{bmatrix}&=
\overset{W}{\overbrace{
		\left[
		\begin{array}{c|cc}
		A & B_1 & B_2 \\ \hline 
		C_{1} & D_{11} & D_{12} \\
		C_{2} & D_{21} & D_{22}
		\end{array} 
		\right]
}}
\begin{bmatrix}
x_t \\ w_t \\ u_t
\end{bmatrix}+
\overset{b}{\overbrace{
		\begin{bmatrix}
		b_x \\ b_v \\ b_y
		\end{bmatrix}
}}, \label{eq:G}\\
w_t=&\sigma(v_t):=\begin{bmatrix}
\sigma(v_{t}^1) & \sigma(v_{t}^2) & \cdots & \sigma(v_{t}^q)
\end{bmatrix}^\top, \label{eq:sigma}
\end{align}   
\end{subequations}
where $x_t\in \R^n, u_t,y_t\in \R^m$ are the state, input and output, respectively. Here $v_t, w_t\in \R^q$ are the input and output variables of an activation function $\sigma$ with slope-restricted in $[0,1]$. Here $(W,b)$ is the learnable parameter. 

RENs contain an algebraic loop in \eqref{eq:ren} which, if well-posed, yields an \emph{equilibrium network} $\phi:(x_t,u_t)\rightarrow w_t$, where $w_t$ is the solution of the following implicit equation
\begin{equation}\label{eq:implicit}
w_t=\sigma(D_{11} w_t+b_w),
\end{equation}
with $b_w=C_1x_t+D_{12}u_t+b_v$. Thus, REN \eqref{eq:ren} can be rewritten as \eqref{eq:system} with 
\[
\begin{split}
    f(x,u)&=Ax+B_1\phi(x,u)+B_2u+b_x, \\
    h(x,u)&=C_2x+D_{21}\phi(x,u) + D_{22}u+b_y.
\end{split}
\]
Note that the equilibrium network \eqref{eq:implicit} is not a single layer but include many existing deep feed-forward structures| e.g., multi-layer perception (MLP), residual network (ResNet) and convolutional neural network (CNN)| as special cases \cite{ghaoui2019implicit}. 

A central result of \cite{revay2020lipschitz} shows that if there exists a $\Lambda\in \mathbb{D}^+$ with $\mathbb{D}^+$ as the set of positive diagonal matrices such that 
\begin{equation}\label{eq:wellpose}
    2\Lambda-\Lambda  D_{11}- D_{11}^\top\Lambda\succ0,
\end{equation}
then the equilibrium network \eqref{eq:implicit} is well-posed, i.e., it admits a unique solution $w_t$ for any $b_w\in \R^q$.

\subsection{Strongly I/O Monotone REN}

We now state our third main theoretical result as follows. 
\begin{thm}\label{thm:main}
Consider the REN model \eqref{eq:ren}. Suppose that there exist $P = P^\top\succ 0$ and $\Lambda \in \mathbb{D}^+$ such that
\begin{align}
   &\begin{bmatrix}\nonumber
       P& -C_1^\top \Lambda&C_2^\top \\
       -\Lambda C_1 & \mathcal{W} &D_{21}^\top - \Lambda D_{12}\\
       C_2&D_{21}- D_{12}^\top\Lambda &-\eta I+D_{22}+D_{22}^\top.
   \end{bmatrix} \\
   &\; -
   \begin{bmatrix}
       A^\top\\
       B_1^\top\\
       B_2^\top 
   \end{bmatrix}P\begin{bmatrix}
       A^\top\\
       B_1^\top\\
       B_2^\top 
   \end{bmatrix}^\top-\sigma\begin{bmatrix}
       C_2^\top\\
       D_{21}^\top\\
       D_{22}^\top
   \end{bmatrix}\begin{bmatrix}
       C_2^\top\\
       D_{21}^\top\\
       D_{22}^\top
       \end{bmatrix}^\top\succ 0 \label{eq:passivity}
    \end{align}
with $\sigma\eta\leq 1$ and $\sigma,\eta>0$, where $\mathcal{W}=2\Lambda -\Lambda D_{11}-D_{11}^\top \Lambda$.
\begin{itemize}
    \item[1)] REN \eqref{eq:ren} is well-posed, invertible, contracting and $(\sigma,\eta)$-strongly I/O monotone.
    \item[2)] It has a causal inverse,  which is a well-posed, contracting and $(\eta,\sigma)$-strongly I/O monotone REN.  
\end{itemize}
\end{thm}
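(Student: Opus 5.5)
Our plan is to reduce Theorem~\ref{thm:main} to Theorem~\ref{thm:monotone-system}, taking the quadratic incremental storage function $V(x',x)=\Delta x^\top P\Delta x$ with $\Delta x=x'-x$. We rely on the standard incremental sector constraint of the slope-restricted activation. For any two trajectories, set $\Delta v_t=v_t'-v_t$ and $\Delta w_t=w_t'-w_t$. Since $\sigma$ acts elementwise with slope in $[0,1]$, for every $\Lambda\in\mathbb{D}^+$ we have
\[
2\Delta w_t^\top\Lambda(\Delta v_t-\Delta w_t)\ge 0 .
\]

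\textbf{Well-posedness.} The $(2,2)$ block of \eqref{eq:passivity} contains $\mathcal{W}$ minus terms of the form $B_1^\top PB_1+\sigma D_{21}^\top D_{21}\succeq 0$. Positive definiteness of the whole matrix therefore gives $\mathcal{W}\succ 0$, which is exactly \eqref{eq:wellpose}. The equilibrium network \eqref{eq:implicit} is then well-posed, and the REN is a system of the form \eqref{eq:system} with locally Lipschitz $f$ and $h$.

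\textbf{Dissipation inequality.} Write $\zeta_t=(\Delta x_t,\Delta w_t,\Delta u_t)$, and substitute $\Delta x_{t+1}=A\Delta x_t+B_1\Delta w_t+B_2\Delta u_t$, $\Delta v_t=C_1\Delta x_t+D_{11}\Delta w_t+D_{12}\Delta u_t$ and $\Delta y_t=C_2\Delta x_t+D_{21}\Delta w_t+D_{22}\Delta u_t$. The key check, and the main bookkeeping step, is the identity
\[
\zeta_t^\top \mathcal{H}\,\zeta_t
= s(\Delta u_t,\Delta y_t)-V(x'_{t+1},x_{t+1})+\Delta x_t^\top P\Delta x_t-2\Delta w_t^\top\Lambda(\Delta v_t-\Delta w_t),
\]
where $\mathcal{H}$ is the matrix in \eqref{eq:passivity} and $s$ is the supply \eqref{eq:io-supply}. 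The individual terms match as follows:
\begin{itemize}
\item $2\Delta y^\top\Delta u$ produces the $C_2$, $D_{21}$ and $D_{22}+D_{22}^\top$ entries in the last block row and column;
\item $-\sigma|\Delta u|^2$ is the subtracted $\sigma$-outer product once it is written in terms of $\Delta y$;
\item $-\eta|\Delta y|^2$ is the $-\eta I$ entry.
\end{itemize}
This matching is ambiguous: the printed placement of $\sigma$ and $\eta$ appears swapped relative to \eqref{eq:io-supply}. We therefore expect the correct statement to come from one of two routes: a Schur complement, or using the symmetric form of $s$ with the roles of $\sigma$ and $\eta$ interchanged and reading the outer product as $-\eta|\Delta y|^2$. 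We would settle the precise matching of terms when writing the proof. This identity is where we expect the main obstacle to lie.

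\textbf{From strict to discounted inequality.} Because $\mathcal{H}\succ 0$ is strict, there is an $\epsilon>0$ with $\mathcal{H}\succeq \epsilon\,\mathrm{diag}(P,0,0)$. Choosing $\alpha^2=1-\epsilon$ and dropping the nonnegative multiplier term then yields
\[
V(x'_{t+1},x_{t+1})-\alpha^2V(x'_t,x_t)\le s(\Delta u_t,\Delta y_t),
\]
which is \eqref{eq:strict-disspative}. The function $V$ is uniformly bounded in the sense of \eqref{eq:uniform-bound} since $P\succ 0$.

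\textbf{Conclusion.} Theorem~\ref{thm:monotone-system} now gives contraction, $(\sigma,\eta)$-strong I/O monotonicity and the existence of a causal inverse \eqref{eq:system-inv} that is contracting and $(\eta,\sigma)$-strongly I/O monotone. Invertibility of the operator also follows from Lemma~\ref{lem:iqc}.

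To show the inverse is itself a REN, we note that the algebraic loop in $(w,u)$ can be solved for $u$ from the $y$-equation. The bottom-right block gives $D_{22}+D_{22}^\top\succ\eta I$, so $D_{22}$ is invertible. Substituting
\[
u=D_{22}^{-1}(y-C_2x-D_{21}w-b_y)
\]
into the remaining equations produces REN matrices with $\hat D_{11}=D_{11}-D_{12}D_{22}^{-1}D_{21}$ and analogous expressions for the other blocks. Well-posedness of the inverse follows because it shares its solution set with the original system, and the same $V$ and $\Lambda$ certify the swapped supply. It can also be seen directly: $h(x,\cdot)$ is bijective by the proof of Theorem~\ref{thm:monotone-system}, so the loop in $(w,u)$ has a unique solution.
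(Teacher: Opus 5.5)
Your proposal is correct and is essentially the paper's proof. The paper obtains \eqref{eq:strict-disspative} with $V=\Delta x^\top P\Delta x$ by citing \cite[Thm.~1]{revay2024recurrent}, which is precisely your sector-constraint and $\epsilon$-margin argument. It then applies Theorem~\ref{thm:monotone-system} and eliminates $u$ through $D_{22}^{-1}$. The only difference is in certifying the inverse REN: the paper uses the congruence $\Psi^\top(\cdot)\Psi$ with $\Psi$ in \eqref{eq:Psi}, which reproduces the same LMI with the same $P,\Lambda$ and so gives $\hat{\mathcal W}\succ0$ immediately, whereas your shared-solution-set and bijectivity-of-$h(x,\cdot)$ argument reaches the same conclusion.

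The $\sigma/\eta$ mismatch you noticed is a genuine typo in the statement, and no Schur complement can repair it. As printed, the LMI certifies the supply $2\langle\Delta y,\Delta u\rangle-\eta|\Delta u|^2-\sigma|\Delta y|^2$. For example, the static gain $y=0.1u$ with $\sigma=1$, $\eta=0.1$ satisfies the printed LMI but is not $(1,0.1)$-strongly I/O monotone. You should therefore simply interchange $\sigma$ and $\eta$, either in the LMI or in the conclusions, after which your identity holds term by term.
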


\begin{remark}
    By adopting the direct parameterization in \cite{revay2024recurrent} with supply rate matrices $Q = -\sigma I, S = I, R = -\eta I$, one obtains a differentiable parameterization $\mathcal{M}:\theta \rightarrow (W,b)$ for contracting and strongly I/O monotone RENs, i.e., \eqref{eq:passivity} holds for any $\theta \in \R^N$. Moreover, REN \eqref{eq:ren} is robustly invertible and $(\mu,\nu)$ bi-Lipschitz with $\mu,\nu$ given in \eqref{eq:uv}.
\end{remark}

\begin{remark}
    The direct parameterization in \cite{revay2024recurrent} also yields a useful subclass of RENs, called \emph{acyclic RENs}, in which the weight $D_{11}$ is guaranteed to be strictly lower triangular. As a result, the implicit layer \eqref{eq:implicit} admit a unique solution $w_t$ that can be computed explicitly by forward substitution, row by row. Although the inverse of an acyclic and strongly I/O monotone REN is still a well-posed REN \eqref{eq:iren}, its corresponding weight matrix $\hat{D}_{11}$ is generally a full matrix. Consequently, the implicit layer of the REN inverse model typically requires an iterative solver based on monotone operator splitting; see \cite{revay2020lipschitz} for details. Depending on the application, one may choose to parameterize either $\M$ or $\M^{-1}$ as an acyclic REN.
\end{remark}

\begin{proof}
Claim 1): First, the well-posedness of REN \eqref{eq:ren} follows directly from \eqref{eq:passivity}. Moreover, condition~\eqref{eq:passivity} implies that REN \eqref{eq:ren} satisfies \eqref{eq:strict-disspative} with $V(x,x')=(x'-x)^\top P(x'-x)$ and some $\alpha\in [0,1)$, see  
\cite[Thm.~1]{revay2024recurrent}. And Thm.~\ref{thm:monotone-system} shows that REN \eqref{eq:ren} is contracting and $(\sigma,\eta)$ strongly I/O monotone. 

Claim 2): From \eqref{eq:passivity} we have $D_{22}+D_{22}^\top \succeq \eta I$, i.e., $D_{22}$ is invertible.  We then construct a causal inverse of \eqref{eq:ren} by taking $u_t=D_{22}^{-1}(-C_2x_t-D_{21}w_t+y-b_y)$ and substituting it back to \eqref{eq:ren}. The REN inverse can also be written as a REN:
\begin{equation}\label{eq:iren}
    \begin{split}
    \renewcommand\arraystretch{1.2}
\begin{bmatrix}
x_{t+1} \\ v_t \\ u_t
\end{bmatrix}&=
\overset{\hat W}{\overbrace{
		\left[
		\begin{array}{c|cc}
		\hat A & \hat B_1 & \hat B_2 \\ \hline 
		\hat C_{1} & \hat D_{11} & \hat D_{12} \\
		\hat C_{2} & \hat D_{21} & \hat D_{22}
		\end{array} 
		\right]
}}
\begin{bmatrix}
x_t \\ w_t \\ y_t
\end{bmatrix}+
\overset{\hat b}{\overbrace{
		\begin{bmatrix}
		\hat b_x \\ \hat b_v \\ \hat b_y
		\end{bmatrix}
}} \\
        w_t&=\sigma(v_t)
    \end{split}
\end{equation}
where 
\begin{equation}\label{eq:irenparam}
    \begin{split}
        &\hat{A}=A-B_2D_{22}^{-1}C_2,\; \hat{B}_1 = B_1-B_2D_{22}^{-1}D_{21}, \\
        &\hat{B}_2 = B_2D_{22}^{-1},\; \hat{C}_1 = C_1-D_{12}D_{22}^{-1}C_2\\
        & \hat{C}_2=-D_{22}^{-1}C_2,\;
        \hat{D}_{11}=D_{11}-D_{12}D_{22}^{-1}D_{21},\\
        &\hat{D}_{12}=D_{12}D_{22}^{-1},\;\hat{D}_{21}=-D_{22}^{-1}D_{21},\; \hat{D}_{22}=D_{22}^{-1}\\
        &\hat{b}_x = b_x-B_2D_{22}^{-1}b_y,\; \hat{b}_v = b_v-D_{12}D_{22}^{-1}b_y,\\
        &\hat{b}_y=-D_{22}^{-1}b_y.
    \end{split}
\end{equation}
We can verify that the model parameters of \eqref{eq:iren} and \eqref{eq:ren} satisfying $\hat{W}= W\Psi$ and $\hat{b}= b\Psi$ with 
\begin{equation}\label{eq:Psi}
    \Psi=\begin{bmatrix}
        I & 0 & 0 \\
        0 & I & 0 \\
        -D_{22}^{-1}C_2 & -D_{22}^{-1}D_{21} & D_{22}^{-1}
    \end{bmatrix}.
\end{equation}
By left- and right-multiplying \eqref{eq:passivity} with $\Psi^\top$ and $\Psi$, respectively, we can obtain
\begin{align}
   &\begin{bmatrix}\nonumber
       P& -\hat C_1^\top \Lambda & \hat C_2^\top \\
       -\Lambda \hat C_1 & \hat{\mathcal{W}} & \hat D_{21}^\top - \Lambda \hat D_{12}\\
       \hat C_2 & \hat D_{21}- \hat D_{12}^\top\Lambda &-\sigma I+\hat D_{22}+\hat D_{22}^\top.
   \end{bmatrix} \\
   &\; -
   \begin{bmatrix}
       \hat A^\top\\
       \hat B_1^\top\\
       \hat B_2^\top 
   \end{bmatrix}P\begin{bmatrix}
       \hat A^\top\\
       \hat B_1^\top\\
       \hat B_2^\top 
   \end{bmatrix}^\top-\eta\begin{bmatrix}
       \hat C_2^\top\\
       \hat D_{21}^\top\\
       \hat D_{22}^\top
   \end{bmatrix}\begin{bmatrix}
       \hat C_2^\top\\
       \hat D_{21}^\top\\
       \hat D_{22}^\top
       \end{bmatrix}^\top\succ 0.
\end{align}
Then, Claim 2 follows by a similar procedure as Claim 1.
\end{proof}

\subsection{Parameterization of Orthogonal Layers and BiLipREN}
  
The model parameterization of static/dynamic orthogonal layers relies on a differentiable parameterization of orthogonal weights $Q\in \R^{m\times m}$. One approach is based on the Cayley transformation \cite{trockman2021orthogonalizing}:
\begin{equation}\label{eq:cayley}
    Q =\mathrm{Cayley}(Z):= (I+J)(I-J)^{-1}
\end{equation}
where $J = Z^\top-Z$ and $Z\in \R^{m\times m}$ is a free parameter. 
Note that $\det(Q)=1$ for all free $Z$. To obtain orthogonal matrices with determinant of -1, one can augment the above parameterization with a Householder transformation: $   \hat{Q}=Q\bigl(I-2z z^\top\bigr)$
where $z=\hat z/|\hat z|$ with $\hat z\in\mathbb{R}^m$ as a learnable parameter. 

We now construct the a contracting and bi-Lipschitz state-space system \eqref{eq:system} via deep composition of simpler layers:
\begin{equation}\label{eq:bilip-dyn}
    \G = \Oc_{K}\circ\M_K\circ\cdots\circ\Oc_{1}\circ\M_{1}\circ\Oc_0
\end{equation}
where each $\M_k$ is a strongly I/O monotone REN and each $\Oc_k$ is a static orthogonal layer. We refer this construction as a $K$-layer BiLipREN since the composition of REN and static orthogonal layers is also a REN. By Theorems~\ref{thm:monotone-system} and \ref{thm:main}, we have that $\G$ is robustly invertible and bi-Lipschitz. Specifically, if $\G_k$ is $(\mu_k, \nu_k)$ bi-Lipschitz, then $\G$ is $(\mu,\nu )$ bi-Lipschitz with $\mu = \Pi^K_{k=1}\mu_k$ and  $\nu = \Pi^K_{k=1}\nu_k$. The inverse of a BiLipREN $\G$ is also a BiLipREN, which is given by
\begin{equation}\label{eq:bilip-dyn-inv}
    \G^{-1} = \Oc_0^{-1}\circ\M^{-1}_1\circ\Oc^{-1}_{1}\circ\cdots\circ\M^{-1}_{K}\circ\Oc^{-1}_{K}.
\end{equation}

\begin{remark}
    An analogous of construction was proposed for static (feedforward) bi-Lipschitz neural networks in \cite{wangmonotone}.
\end{remark}
\begin{remark} 
A basic example arises from the singular value decomposition (SVD) of an invertible matrix $W$, namely $W=USV$, where the matrices $U,V$ are orthogonal and the positive diagonal matrix $S$ is strongly I/O monotone.
\end{remark}
\begin{remark}
As another classical example, consider a single-input single-output linear system $\bm G(z)$ which is minimum-phase and bi-proper. Since all poles and zeros are strictly inside the unit circle, the phase $\angle \bm G(e^{j\omega})$ is bounded on $\omega\in[0,2\pi]$ and by the argument principle does not wrap, i.e. has net change of zero over $\omega\in[0,2\pi]$. Hence for a sufficiently large integer $N$, $\bm G(z)^{1/N}$ is strictly positive real: $\angle \bm G(e^{j\omega})\in(-\pi/2,\pi/2)$, and hence is strongly input-output monotone. Therefore $G$ can be represented as a product of $N$ strongly input-output monotone layers: $\bm G = \bm G^{1/N}\circ \cdots \circ \bm G^{1/N}$, a special case of \eqref{eq:bilip-dyn}. Note that $\bm G^{1/N}$ will not generally be rational, but it can be approximated arbitrarily closely as such.
\end{remark}

\section{Illustrative Examples}

In the following sections, we will present experiments which explore the robust utility of robust invertibility and the proposed BiLipREN model. All experiments were performed with an NVIDIA GeForce RTX 4090. The Python/JAX code is available at \url{https://github.com/acfr/BiLipREN}.

\subsection{Robust Inversion}

\begin{figure}[!bt]
\centering
    \includegraphics[width=0.95\linewidth]{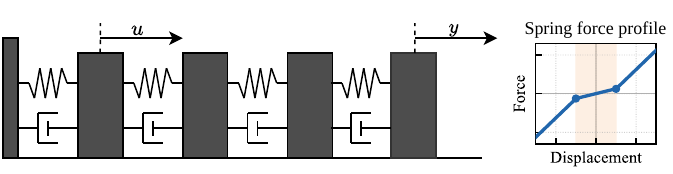}
    \caption{Schematic plot of coupled mass--spring--damper system with nonlinear spring force profile.}\label{fig:msd} 
\end{figure}

\begin{figure}[!bt]
\centering
    \includegraphics[width=\linewidth]{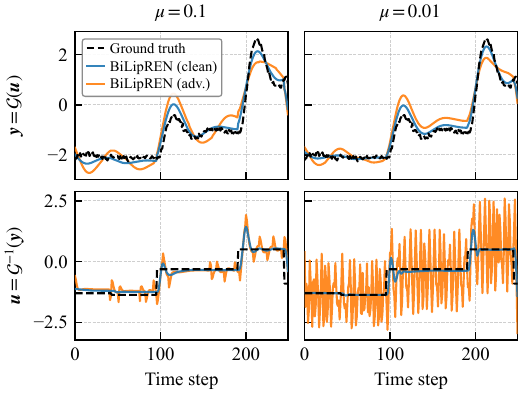}
    \caption{Forward (Top) and inverse (Bottom) response of BiLipREN without disturbance and with adversarial attacks.}\label{fig:msd-adv} 
\end{figure}

We first consider the problem of learning a robustly invertible model for the nonlinear mechanical system illustrated in Figure~\ref{fig:msd}, consisting of four coupled mass--spring--damper subsystems with nonlinear spring force profile. We excite the first cart with a random piecewise-constant input, where the amplitude and period are uniformly sampled from $[-1.5, 1.5]$ and $[0,50]$, respectively. The output is the position of the last cart, corrupted by Gaussian noise $\mathcal{N}(0,0.05)$. The training dataset consists of $200$ input-output trajectories, each of length $T=500$. 

We fit a $4$-layer BiLipREN $\G$ \eqref{eq:bilip-dyn}, with each strongly I/O monotone REN layer containing $16$ states and $64$ neurons. For comparison, we train a contracting REN (C-REN) \cite{revay2024recurrent} with similar size as BiLipREN. We then evaluate the forward model $\G$ with normalized simulation error (NSE):
\begin{equation}\label{eq:nse}
    \mathrm{NSE}(\hat{\yb},\yb)=
\frac{\lVert \hat{\yb}-\yb\rVert_T}
{\lVert \yb\rVert_T}
\end{equation}
with $\hat{\yb}=\G(\ub)$, where $(\ub,\yb)$ denotes the test input-output trajectory. The model robustness is measured by the NSE under  adversarial attacks: 
\begin{equation}
   \begin{split}
     \max_{\boldsymbol{\Delta}\ub}\; &\ \operatorname{NSE}\!\left(\mathcal{G}(\ub+\boldsymbol{\Delta}\ub),\,\yb\right) \\
     \text{s.t.}\quad& |\Delta u_t|\ \le\ \delta,\quad 0\leq t\leq T
\end{split} 
\end{equation}
where $\delta>0$ is the attack budget. For the inverse model $\G^{-1}$, the clean and adversarial NSE can be defined in a similar way.

Table~\ref{tab:msd-nse} reports the forward and inverse fitting performance under clean and adversarial inputs. Although BiLipREN imposes stronger structural constraints than C-REN, it achieves comparable performance on the forward modeling task. In contrast to C-REN, BiLipREN can yield an explicit inverse model without any additional training.

As the Lipschitz bounds are relaxed (i.e., smaller $\mu$ and larger $\nu$), the forward BiLipREN achieves improved NSE under both clean and adversarial inputs. For the inverse model, the performance remains largely unchanged except when $\mu=0.01$, where the NSE degrades significantly in the presence of adversarial attacks. As illustrated in the bottom-right panel of Figure~\ref{fig:msd-adv}, small attacks in $\yb$ can induce large oscillations in the reconstructed input  $\ub$. This highlights a trade-off in selecting $\mu$: a smaller $\mu$ improves the forward fitting accuracy but increases the inverse Lipschitz bound $1/\mu$, making the inverse model more sensitive to perturbations. 

\begin{table}[!tb] 
\centering \scriptsize \setlength{\tabcolsep}{3.0pt} \renewcommand{\arraystretch}{1.12} \begin{tabular}{c|c|c|cc|cc} \toprule \multicolumn{3}{c|}{Model} & \multicolumn{2}{c|}{Forward NSE $\ub \mapsto\yb$} & \multicolumn{2}{c}{Inverse NSE $\yb \mapsto\ub$} \\ \cmidrule(lr){1-3} \cmidrule(lr){4-5} \cmidrule(lr){6-7} Arch. & $\mu$ & $\nu$ & Clean $\downarrow$ & Adv. $\downarrow$ & Clean $\downarrow$ & Adv. $\downarrow$ \\ \midrule C-REN & -- & -- & 0.152 & 0.373 & -- & -- \\ \midrule \multirow{8}{*}{BiLipREN} & \multirow{4}{*}{$0.1$} & $4$ & 0.281 & 0.362 & 0.418 & 0.445 \\ & & $8$ & 0.216 & 0.321 & 0.415 & 0.452 \\ & & $16$ & 0.193 & 0.304 & 0.426 & 0.468 \\ & & $32$ & 0.181 & 0.298 & 0.422 & 0.466 \\ \cmidrule(lr){2-7} & $0.01$ & \multirow{4}{*}{$8$} & 0.179 & 0.287 & 0.381 & 0.610 \\ & $0.05$ & & 0.198 & 0.308 & 0.404 & 0.463 \\ & $0.1$ & & 0.216 & 0.321 & 0.415 & 0.452 \\ & $0.5$ & & 0.298 & 0.390& 0.399 & 0.411 \\ \bottomrule \end{tabular} \caption{Forward and inverse model fitting performance under clean and adversarial inputs, where the attack budget is $\delta=0.05$. } \label{tab:msd-nse} \end{table}

\subsection{Nonlinear Inner-outer Factorization} 
We consider the problem of learning a nonlinear inner–outer factorization \eqref{eq:inner-outer} for a stable nonlinear delay system 
\begin{equation}\label{eq:deley-siso}
    y_{t+1}=0.9 \tanh (y_t)+u_{t-3}.
\end{equation}
The training dataset consists of $2000$ input–output trajectories, where each of length $50$ is generated using inputs $u_t\sim \mathcal{N}(0,1)$ for $t\geq 0$ and $u_t=0$ for $t<0$. We fit a BiLipREN which is deliberately over-parameterized to illustrate robustness, consisting of a $10$-state dynamic orthogonal layer $\bm{O}(z)$ without bias term, and a strongly I/O monotone REN $\M$ with $4$ states and $16$ neurons as the robustly invertible model. The Lipschitz parameters of $\M$ are chosen as $\mu = 0.1$ and $\nu=8$. 

The impulse responses in Figure~\ref{fig:io-factorization} (top) demonstrate that the learned dynamic orthogonal layer closely approximates the 3-step time delay in \eqref{eq:deley-siso}. Meanwhile, the outer component $\mathcal{G}$ captures the associated minimum-phase dynamics of \eqref{eq:deley-siso}, as illustrated in Figure~\ref{fig:io-factorization} (bottom).

\begin{figure}[!tb]
\centering
    \includegraphics[width=0.85\linewidth]{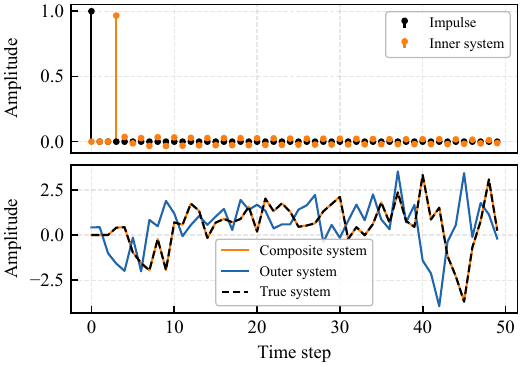}
    \caption{The impulse response of the inner system (Top). The responses of the outer system, the original system and the composed system under random inputs (Bottom).}\label{fig:io-factorization} 
\end{figure}

\section{Internal Model Control}
Tracking and disturbance rejection are central objectives in control design, and Internal model control (IMC) is a classical design method for these objectives which can explicitly utilize a stable model inverse inside a feedback loop \cite{garcia1982internal}. Here we illustrate the use of robustly invertible nonlinear dynamics for a nonlinear version of this classical scheme, in which the aim is that the response to reference commands $\bm{r}\mapsto\bm{y}$ closely follows a contracting and Lipschitz desired response $\mathcal R$. In the classical setting $\mathcal R$ is often a simple low pass filter, e.g. $\bm{R}(z) = \tfrac{(1-p)z}{z-p}\circ I$.

\begin{figure}[!tb]
    \centering
    \includegraphics[width=0.85\linewidth]{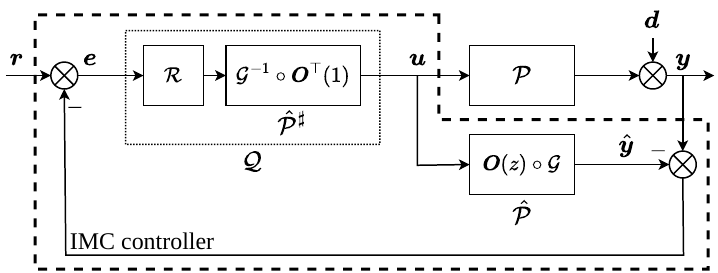}
    \caption{Internal model control structure based on learned nonlinear inner-outer factorization.}
    \label{fig:imc-close}
\end{figure}

\begin{prop}\label{prop:1}
Consider the IMC scheme shown in Figure~\ref{fig:imc-close}. Suppose that $\mathcal{P}$ is contracting and Lipschitz, and $\hat{\mathcal{P}}$ admits a nonlinear inner--outer factorization $\hat{\mathcal{P}}=\bm{O}(z)\circ\G$, where $\G$ is $(\mu,\nu)$ bi-Lipschitz and $\bm{O}(z)$ is a dynamic orthogonal layer. Suppose the desired response $\mathcal R$ is contracting and $\omega$-Lipschitz 
\begin{enumerate}
\item Without model mismatch (i.e. $\Delta\mathcal{P}:=\mathcal{P}-\hat{\mathcal{P}}=0$), the closed loop is contracting with response:
\begin{equation}
    \begin{split}
        \yb &=\bm{O}(z)\circ\bm{O}(1)^\top \circ \mathcal{R} (\bm{r}-\bm{d})+ \bm{d},\\
        \ub &=\hat{\mathcal{P}}^{\sharp}\circ \mathcal{R} (\bm{r}-\bm{d}) =\mathcal{G}^{-1} \circ\bm{O}(1)^\top \circ \mathcal{R} (\bm{r}-\bm{d}).
    \end{split}\label{eq:IMC_response}
\end{equation}
\item With model mismatch, suppose $\Delta\mathcal{P}$ is $\gamma$-Lipschitz  with $\gamma \omega<\mu$, then the closed-loop system is contracting.
\end{enumerate}
In both cases, if the steady-state map of $\mathcal{R}$ is the identity, then the closed loop achieves zero steady-state error to step references and disturbances.
\end{prop}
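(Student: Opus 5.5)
The plan is to write the IMC loop of Figure~\ref{fig:imc-close} in signal form and handle the three claims in turn. I assume the standard IMC wiring: the controller is $\mathcal{Q}=\hat{\mathcal{P}}^{\sharp}\circ\mathcal{R}=\G^{-1}\circ\Ob(1)^\top\circ\mathcal{R}$, the plant output is $\yb=\mathcal{P}(\ub)+\bm{d}$, and the feedback signal is $\bm{e}=\yb-\hat{\mathcal{P}}(\ub)$. The loop is then $\ub=\mathcal{Q}(\bm{r}-\bm{e})$, with $\bm{e}=\Delta\mathcal{P}(\ub)+\bm{d}$. The closed-loop state collects the states of $\mathcal{P}$, $\hat{\mathcal{P}}$ (that is, of $\Ob(z)$ and $\G$), $\G^{-1}$ and $\mathcal{R}$.

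\emph{Case 1: no mismatch.} If $\Delta\mathcal{P}=0$, then $\bm{e}=\bm{d}$. There is one subtlety: $\mathcal{P}$ and $\hat{\mathcal{P}}$ may start from different initial states, in which case $\bm{e}=\bm{d}+\bm{\epsilon}$. Here $\bm{\epsilon}$ is the difference of two trajectories of the same contracting system driven by the same input, so it decays exponentially. With $\bm{\epsilon}=0$ the loop becomes open-loop, $\ub=\mathcal{Q}(\bm{r}-\bm{d})$, which is the $\ub$-formula in \eqref{eq:IMC_response}. Substituting into $\yb=\Ob(z)\circ\G(\ub)+\bm{d}$ and using $\G\circ\G^{-1}=I$ then gives the $\yb$-formula.

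Contraction in this case is by a cascade argument. First, the mismatch between plant and model states contracts because $\mathcal{P}$ is contracting. Second, $\mathcal{R}$, the static map $\Ob(1)^\top$, and $\G^{-1}$ are contracting; for $\G^{-1}$ this is the inverse part of Theorem~\ref{thm:monotone-system} and Theorem~\ref{thm:main}, preserved under composition as in Corollary~\ref{coro:1}. Third, $\Ob(z)$ and $\G$ are contracting. Finally, each block is Lipschitz in its input, so an exponentially decaying input perturbation such as $\bm{\epsilon}$ produces an exponentially decaying incremental state error. Hence the series interconnection is contracting.

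\emph{Case 2: mismatch.} With mismatch the loop is $\ub=\mathcal{Q}(\bm{r}-\bm{d}-\Delta\mathcal{P}(\ub))$. The gain of $\mathcal{Q}$ is at most $\omega\cdot 1\cdot\frac{1}{\mu}$, since $\Ob(1)$ is orthogonal by \eqref{eq:orth-dc} at $z=1$. The loop gain is therefore $\gamma\omega/\mu<1$, and an incremental small-gain argument applies. Take two trajectories with the same exogenous inputs but different initial states. Bounding $\|\Delta\ub\|_T$ by the initial-condition transients of each block (as in the proof of Theorem~\ref{thm:robust-inv}) plus $\frac{\gamma\omega}{\mu}\|\Delta\ub\|_T$ gives $\|\Delta\ub\|_T\le c\,|\Delta x_0|$.

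The main obstacle is upgrading this $\ell_2$ bound to exponential contraction of the joint state. I would first run the same argument on exponentially weighted signals $\rho^{-t}\Delta u_t$ with $\rho<1$ close to $1$. Contracting Lipschitz systems keep finite weighted gains close to their unweighted gains, so the strict small-gain margin survives for $\rho$ near $1$. Exponential convergence of all the states then follows. I also need well-posedness of the algebraic loop, which comes from the same contraction-mapping bound $\gamma\omega/\mu<1$ applied at each time step.

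\emph{Zero steady-state error.} For constant $\bm{r}$ and $\bm{d}$ the closed loop is contracting and time-invariant, so all trajectories converge to a unique equilibrium with constant $u^*$, $y^*$ and $e^*$. At that equilibrium, $\mathcal{R}$ acts as the identity and $\Ob(z)$ has DC gain $\Ob(1)$. The model output is therefore $\hat{\mathcal{P}}_{ss}(u^*)=\Ob(1)\G_{ss}\bigl(\G_{ss}^{-1}(\Ob(1)^\top(r-e^*))\bigr)=r-e^*$, using $\Ob(1)\Ob(1)^\top=I$. Hence $y^*=e^*+\hat{\mathcal{P}}_{ss}(u^*)=r$ in both cases.
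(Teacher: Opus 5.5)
Your proposal follows the paper's proof essentially step for step. With no mismatch you treat the feedback signal as the exogenous $\bm{r}-\bm{d}$, so the loop is an open cascade and the responses come from $\hat{\mathcal{P}}\circ\hat{\mathcal{P}}^{\sharp}=\Ob(z)\circ\Ob(1)^\top$; with mismatch you get contraction from the incremental small-gain bound $\mathrm{Lip}(\mathcal{Q})\,\mathrm{Lip}(\Delta\mathcal{P})\le\gamma\omega/\mu<1$; and zero steady-state error follows from $\hat{\mathcal{P}}_{ss}\circ\hat{\mathcal{P}}^{\sharp}_{ss}=\Ob(1)\Ob(1)^\top=I$. Your additional treatment of plant/model initial-state mismatch, exponential weighting and algebraic-loop well-posedness is extra care the paper does not include; it simply treats the loop as open in Case~1 and cites the incremental small-gain theorem directly in Case~2.
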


\begin{remark} In the absence of model error and disturbances, the response $\bm r \mapsto \bm y$  is the desired response $\mathcal R$ composed with the non-invertible dynamic component $\bm O(z) \circ \bm O(1)^\top$ of the all-pass (inner) factor $\bm O(z)$. 
\end{remark}

\begin{remark}
    The proposed IMC controller is a special case of the Youla-REN policy class in \cite{barbara2025react}, with $\hat{P}$ serving as the system observer and $\mathcal{Q}=\hat{\mathcal{P}}^{\sharp}\circ \mathcal{R}$ as the Youla parameter. 
\end{remark}

\begin{proof}
Claim 1): Without model mismatch, the error $\bm{e}=\bm{r}-\bm{d}$ becomes an external signal, so the closed loop is contracting by construction. The closed-loop input response is $\ub=\hat{\mathcal{P}}^{\sharp}\circ \mathcal{R}(\bm{r}-\bm{d})$ and the output response is 
\[
\begin{split}
    \yb&=\bm{d}+\mathcal{P}(\ub)= \bm{d}+\hat{\mathcal{P}}\circ\hat{\mathcal{P}}^{\sharp}\circ \mathcal{R}(\bm{r}-\bm{d}) \\
    &=\bm{d}+\bm{O}(z)\circ\bm{O}(1)^\top \circ \mathcal{R} (\bm{r}-\bm{d}).
\end{split}
\]
If $\mathcal{R}_{ss}=I$, then the steady-state output $y_{ss}$ satisfies
\[
y_{ss}=d_{ss}+\bm{O}(1)\bm{O}(1)^\top \mathcal{R}_{ss} (r-d_{ss})=r.
\]
Claim 2): Since $\mathcal{R}$ is $\omega$-Lipschitz, we obtain that the loop gain of $\mathcal{Q}:=\hat{\mathcal{P}}^{\sharp}\circ \mathcal{R}$ and $\Delta \mathcal{P}$ satisfies
\begin{equation}\label{eq:small-gain}
    \mathrm{Lip}(\mathcal{Q})\mathrm{Lip}(\Delta \mathcal{P})\leq \mathrm{Lip}(\hat{\mathcal{P}}^{\sharp})\mathrm{Lip}(\Delta\mathcal{P})\leq \frac{\gamma \omega}{\mu}<1.
\end{equation}
where $\mathrm{Lip}(\cdot)$ denotes the least Lipschitz upper bound of an operator. Then, the closed loop is contracting by incremental small gain theorem, implying that it converges to a unique steady state under constant inputs. Since $\mathcal{R}_{ss}=I$, we have
\begin{equation}
    \hat{y}_{ss}=\hat{\mathcal{P}}_{ss}(u_{ss})=\hat{\mathcal{P}}_{ss}\circ \hat{\mathcal{P}}_{ss}^{\sharp} \circ \mathcal{R}_{ss}(e_{ss})=e_{ss},
\end{equation}
where $\hat{\mathcal{P}}_{ss}$ and $\hat{\mathcal{P}}_{ss}^{\sharp}$ denote the steady state maps of $\hat{\mathcal{P}}$ and $\hat{\mathcal{P}}^{\sharp}$, respectively. This further implies  $y_{ss}=\hat{y}_{ss}+r-e_{ss}=r$.
\end{proof}

\subsection{Numerical Example}
\begin{figure}[!tb]
    \centering
    \includegraphics[width=\linewidth]{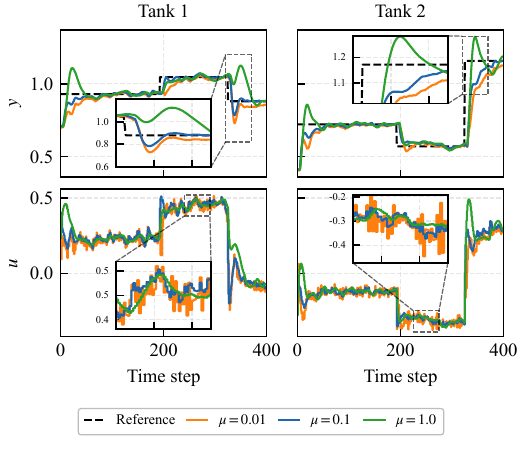} 
    \caption{Closed-loop responses of the learned IMC controllers under different lower Lipschitz bounds, showing the output tracking performance (Top), and the corresponding normalized control inputs (Bottom).}
    \label{fig:imc-response}
\end{figure}

We implement the proposed learning based IMC approach on a time-delayed quadruple tank system from \cite{bonassi2022recurrent}:
\[
\begin{aligned}
\dot{h}_1(t) &= -\frac{a_1}{S}\sqrt{2gh_1(t)} + \frac{a_3}{S}\sqrt{2gh_3(t)} + \frac{\gamma_a}{S}q_a(t-\tau) \\
\dot{h}_2(t) &= -\frac{a_2}{S}\sqrt{2gh_2(t)} + \frac{a_4}{S}\sqrt{2gh_4(t)} + \frac{\gamma_b}{S}q_b(t-\tau) \\
\dot{h}_3(t) &= -\frac{a_3}{S}\sqrt{2gh_3(t)} + \frac{1-\gamma_b}{S}q_b(t-\tau) \\
\dot{h}_4(t) &= -\frac{a_4}{S}\sqrt{2gh_4(t)} + \frac{1-\gamma_a}{S}q_a(t-\tau)
\end{aligned}
\]
where $h_i$ is $i$th tank water level, $q_a,q_b$ are the input flow rates. We take $\gamma_a=0.5$, $\gamma_b=0.6$ and $\tau=125\mathrm{s}$. The values of $S$ and $a_i$ can be found in   \cite{bonassi2022recurrent}. The system output is $y=[h_1,h_2]^\top$ and the control input is $u=[u_1,u_2]^\top$ where 
\begin{equation*}
    \begin{split}
        q_a&=\frac{q_{a,\min} + q_{a,\max}}{2}+\frac{q_{a,\max}-q_{a,\min}  }{2}u_1, \\
        q_b&=\frac{q_{b,\min} + q_{b,\max}}{2}+\frac{q_{b,\max}-q_{b,\min}}{2}u_2.
    \end{split}
\end{equation*}
The system is discretized with sampling period of $25\mathrm{s}$.  We generated $1000$ input-output trajectories by simulating the system with a pseudo-random multilevel signal whose amplitude is drawn uniformly over the pump's operating range and held for a random duration $[2, 12]$. The training dataset is constructed with randomly extracted input-output sequences of length $T=50$ from these trajectories. 

We parameterize $\hat{\mathcal{P}}$ via the nonlinear inner-outer factorization \eqref{eq:inner-outer}, which consists of an 8-state dynamic orthogonal layer $\bm O(z)$ without a bias term and a $3$-layer $(\mu,\nu)$ BiLipREN $\G$ \eqref{eq:bilip-dyn} with each strongly I/O monotone REN layer consisting of 6 states and 16 neurons. We take a fixed Lipschitz bound $\nu=10$ and difference choices of inverse Lipschitz bound $\mu$. We then simulated the closed-loop system with the learned IMC controller and low-pass filter $\mathcal{R}$ under random piece-wise reference $\bm{r}$ and Gaussian noise $d_t\sim \mathcal{N}(0,0.05)$. 

Table~\ref{tab:imc} reports the performance of model fitting and IMC control performance, measured by NSE \eqref{eq:nse} and root mean square error (RMSE) $\|\yb-\bm{r}\|/\sqrt{T}$, respectively. We estimate the empirical Lipschitz constant of $\Delta\mathcal{P}$ by randomly sampling a large number of input-trajectory pairs, simulating their corresponding residual outputs, and computing the maximum observed input-output difference ratio. The results show that the (generally conservative) small gain condition \eqref{eq:small-gain} is satisfied for all cases except $\mu=0.01$, although the closed-loop system did not exhibit unstable behavior. 

As shown in Table~\ref{tab:imc}, reducing $\mu$ improves model fitting so that the closed-loop response of the control input approximates \eqref{eq:IMC_response}
which has an Lipschitz bound of $\omega/\mu$. Thus, a too-small $\mu$ can result in $\ub$ being more sensitive to noise $\bm d$, see the second row of Figure~\ref{fig:imc-response}. Larger $\mu$ results in poorer model fits performance and larger empirical Lipschitz bound of $\Delta\mathcal{P}$, but produces smoother control actions and improved tracking while still satisfying the small-gain condition~\eqref{eq:small-gain}. An excessively large $\mu$, such as $\mu=1.0$, increases the model mismatch and causes significant overshoot, thereby degrading the closed-loop performance.

\begin{table}[!tb]
\centering
\scriptsize
\setlength{\tabcolsep}{4pt}
\renewcommand{\arraystretch}{1.12}
\begin{tabular}{c|c|c|c}
\toprule
Learned I/O fact. $\hat{\mathcal P}$
& Model mismatch $\Delta\mathcal P$
& Model fit
& Closed-loop \\
Cert. inv. Lip. $\mu$
& Emp. Lip. $\gamma$
& NSE $\downarrow$
& RMSE $\downarrow$ \\
\midrule
$0.01$ & $0.029$ & $0.045$ & $0.070$ \\
$0.05$ & $0.038$ & $0.057$ & $0.051$ \\
$0.1$  & $0.049$ & $0.074$ & $0.049$ \\
$0.5$  & $0.179$ & $0.274$ & $0.045$ \\
$1.0$  & $0.577$ & $0.867$ & $0.060$ \\
\bottomrule
\end{tabular}
\caption{Fitting NSE and closed-loop RMSE under different certified
inverse Lipschitz bounds $\mu$ and the corresponding empirical
model-mismatch Lipschitz constants $\gamma$.}
\label{tab:imc}
\end{table}

\section{Surrogate Dynamic Loss Learning}
Consider a standard-form trajectory optimization problem:
\begin{equation}\label{eq:ocp}
\begin{aligned}
\min_{\ub_{[T]}\in\ell^m}\quad
&J\left(\ub_{[T]}\right)
:=
c_f(x_{T+1})+\sum_{t=0}^{T}c_t(x_t,u_t)\\
\mathrm{s.t.}\quad
&x_{t+1}=f(x_t,u_t), \quad x_0=a
\end{aligned}
\end{equation}
where $T$ is the horizon, $c_t,c_f$ are the stage and terminal cost functions, respectively. 

In this work, we consider a ``black-box'' setting in which the model $f$, initial state $a$, and cost functions $c_t,c_f$ are unknown. Instead, we only have access to a finite dataset of sampled input-loss pairs $\{(\ub^i, J^i)\}_{1\leq i\leq N}$. Our approach consists of two steps:
\begin{enumerate}
    \item Learn a differentiable surrogate loss $\hat{J}:\ell^m\to \R$ to fit the  dataset, i.e. $\hat J(u_i)\approx J_i$ for all $i$. 
    \item Compute an control input sequence by solving the surrogate optimization problem:
    \begin{equation}\label{eq:surrogate-opt}
        \hat{\ub}_{[T]}^\star:=\argmin_{\ub_{[T]}\in \ell^m}\; \hat{J} \left(\ub_{[T]}\right).
    \end{equation}
\end{enumerate}
Although we don't pursue this here, these two steps may be performed iteratively, with the surrogate model being refined as new candidate solutions $\hat{\ub}_{[T]}^\star$ are generated and evaluated. 

Fitting a black-box neural surrogate (e.g. a neural network $\hat{J}:\R^{m(T+1)}\rightarrow \R$) has two issues: firstly, neural network structures require the time horizon to be fixed, and model complexity grows with the horizon. Secondly, while the surrogate may achieve a good fit to the training data, it can introduce undesirable optimization artifacts such as spurious local minima, flat regions, or poorly conditioned gradients. 

To address the first issue, we introduce surrogate loss functions of the form:
\begin{equation}\label{eq:dyn-pl}
\hat{J}(\ub_{[T]})=\frac{1}{2}\|\G(\ub_{[T]})\|_{T}^{2}+c 
\end{equation}
where $\G$ is a dynamical system and $c\in \R$ is a learnable bias term, capturing the temporal structure in the problem. To address the second problem, parameterize $\G$ as a $(\mu,\nu)$ bi-Lipschitz recurrent model \eqref{eq:bilip-dyn}.

Analogously to the finite-dimensional case in \cite{wangmonotone}, we can show that the proposed surrogate loss satisfies the following PL condition (\cite{polyak1963gradient,lojasiewicz1963topological}):
\begin{equation}
    \frac{1}{2}\|\nabla \hat{J}(\ub)\|_T^2\geq \mu \left[\hat{J}(\ub_{[T]})-\hat{J}(\hat{\ub}_{[T]}^\star)\right],\;\forall \ub_{[T]}\in \ell^m.
\end{equation}
The PL condition is significant in optimization as it is weaker than convexity, but still implies gradient methods converge to a global minimum $\hat{\ub}_{[T]}^\star$ with a linear rate. Moreover, due to the quadratic structure  \eqref{eq:dyn-pl} and robust invertibility of $\G$, the surrogate loss $\hat{J}$ has a unique  minimizer $\hat{\ub}_{[T]}^\star$ given by
\begin{equation}\label{eq:plnet-inv}
  \hat{\ub}_{[T]}^\star=\G^{-1}(\bm 0)_{[T]},  
\end{equation}
which provides a direct, non-gradient-based method for solving \eqref{eq:surrogate-opt}. 

\subsection{Numerical Example}\label{sec:car-traj}
\begin{figure*}[!tb]
    \centering
    \includegraphics[width=\textwidth]{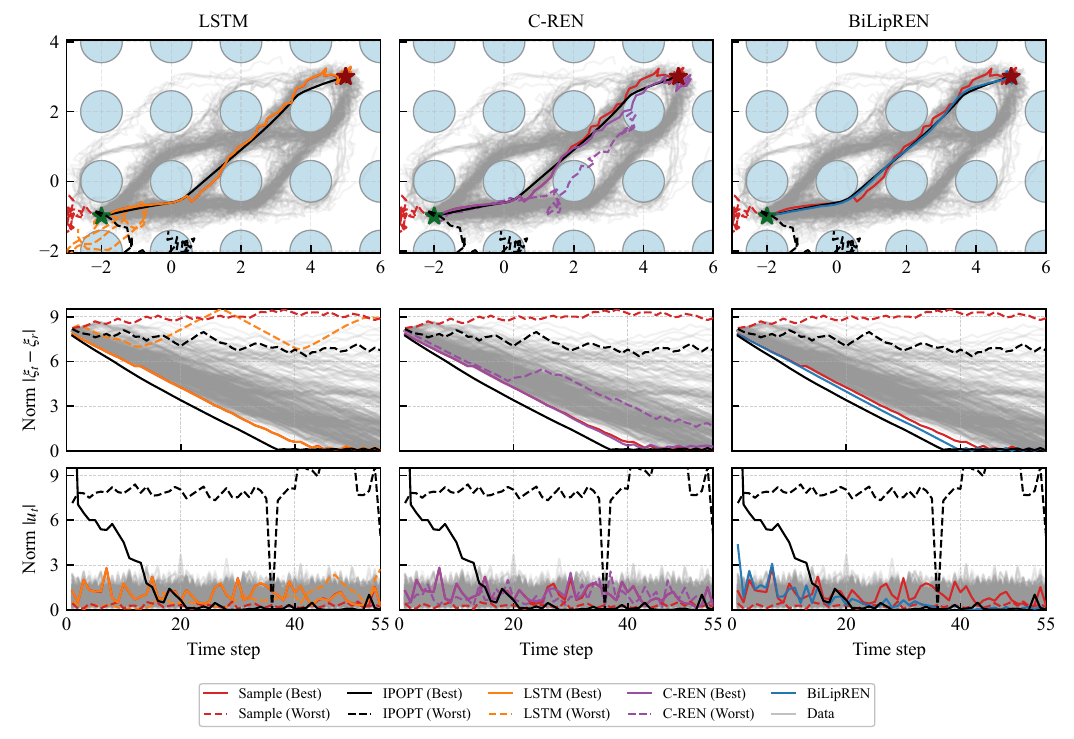}
    \caption{Trajectory samples and the corresponding input sequences obtained by minimizing the surrogate loss model \eqref{eq:dyn-pl}. Results are shown for three parameterizations of $\G$: LSTM, C-REN, and BiLipREN, compared to  IPOPT, which has access to the true model and cost.}
    \label{fig:mbd-pl}
\end{figure*}

We consider a simple 2D obstacle-avoidance problem from \cite{mishra2025eb}. The objective is to generate an input sequence that steers the vehicle from an initial position to a target position while avoiding obstacles, as illustrated in Figure~\ref{fig:mbd-pl}. The vehicle dynamics and cost function are given by
\begin{equation}
\begin{aligned}
\xi_{t+1} &= \xi_t + 0.4\,\mathrm{sigmoid}(|u_t|)\,\frac{u_t}{|u_t|},\quad \xi_0=\xi_s, \\
J(\ub_{[T]}) &= 20\left|\xi_{T}-\xi_r\right|
+ 10\sum_{t=0}^{T-1}\bigl[\left|\xi_t-\xi_r\right|
+ 0.1\left|u_t\right| \\
&+\mathbf{1}_{\mathbb{X}}(\xi_t)\bigr],
\end{aligned}
\end{equation}
where $\xi_t, u_t\in \R^2$ denote the state and input, respectively, $\xi_s$ and $\xi_r$ are the initial and target positions. The sigmoid function is defined as $\mathrm{sigmoid}(x) = 1/(1 + e^{-x})$ and $\mathbf{1}_{\mathbb{X}}$ is the indicator function of the obstacle set $\mathbb{X}$ (i.e., the blue disks in Figure~\ref{fig:mbd-pl}). 

We generate $25,000$ trajectories, each of length $T=55$, using the emerging barrier model based diffusion (EB-MBD) method \cite{mishra2025eb}. This provides only very limited coverage of the input space, which has dimension of $2T=110$. We collect the corresponding input sequences and the cost values, which span a wide range, see Table \ref{tab:surrogate-loss}, making accurate surrogate loss learning challenging. To concentrate the fitting accuracy on high-quality samples, we introduce the following cost-weighted regression loss:
\begin{equation}\label{eq:pl-loss}
L=
\frac{\sum_{i} w^i\bigl(\hat J(\ub^i)-J^i\bigr)^2}{\sum_{i} w^i}
\end{equation}
where the sample weights are $w^i=e^{-(J^i-J_{m})/\tau}$ with $\tau>0$ denoting the temperature parameter and $J_{m}$ the minimum objective value among all training samples. The training loss is further augmented with the regularization term $\lambda(c-J_{m})^2$, which encourages the minimum surrogate loss value $c$ to remain close to $J_{m}$.  We use the dynamic surrogate dynamic model \eqref{eq:dyn-pl} where $\G$ is parameterized as a 5-layer BiLipREN \eqref{eq:bilip-dyn} with $\mu=0.1$ and $\nu=48$, each strongly I/O monotone REN layer containing $8$ states and $64$ neurons. For comparison, we also train surrogate models in which $\G$ is parameterized by a Long Short-Term Memory (LSTM) \cite{hochreiter1997long} and a C-REN of comparable model size. 

After training, we solve the surrogate optimization problem \eqref{eq:surrogate-opt} using gradient descent for LSTM and C-REN, and using the exact inverse for BiLipREN. Since the surrogate loss function $\hat{J}$ is highly non-convex, we initialize the gradient methods with $60$ trajectories randomly selected from the training dataset. To provide a baseline with perfect model knowledge, we also solve the original trajectory optimization problem \eqref{eq:ocp} using IPOPT method \cite{wachter2006implementation} with the same initial guesses.

\begin{table}[!tb]
\centering
\scriptsize
\setlength{\tabcolsep}{4pt}
\renewcommand{\arraystretch}{1.12}
\begin{tabular}{c c c c}
\toprule
Model & Fitting loss $L$ & Best cost $J$ & Worst cost $J$ \\
\midrule
Dataset  & --     & 1863& 5055 \\
\midrule
LSTM     & 1718   & 1868 & 4758 \\
C-REN    & 6014  & 1918 &2996\\
BiLipREN & 22805 & 1672 & --\\
\midrule
IPOPT    & --     & 1618 &5837\\
\bottomrule
\end{tabular}
\caption{The fitting error of surrogate loss model regression and the best/worst costs of input trajectories generated by surrogate optimization.}
\label{tab:surrogate-loss}
\end{table}

Table~\ref{tab:surrogate-loss} and Figure~\ref{fig:mbd-pl} show although the surrogates based on LSTM and C-REN fit the mapping $u\mapsto J$ more accurately, optimizing them via gradient descent was highly sensitive to initialization, and even their best results are quite ragged and did not improve upon the best training sample. In contrast, BiLipREN analytically produces a single solution from \eqref{eq:plnet-inv} which is smooth, avoids obstacles, and out-performs the best training sample. We note that even the baseline IPOPT, which has perfect access to the model and cost, is also highly sensitive to initialization due to the non-convex nature of the problem. 

To summarize, the proposed surrogate loss model comes equipped with an analytically-computable minimum, which in this example improved upon the best sample in the training data, indicating that it encodes some useful structure of the problem.

\section{Dynamic Generative Models}

We consider the problem of learning a generative model for a complex trajectory distribution $\yb_{[T]}\sim p_{\mathrm{data}}$. Specifically, we seek to learn a robustly invertible dynamical model $\G$ such that matching the data distribution can be generated by
\begin{equation}\label{eq:gen-model}
    \yb_{[T]}=\G\bigl(\ub_{[T]}\bigr),\quad u_t\sim \mathcal{N}(0,I).
\end{equation}
That is, the generative model $\G$ transforms a Gaussian white-noise sequence into samples with a similar distribution to the training data see the bottom-right panel of Figure~\ref{fig:comparison}. 

Analogous to the change-of-variables formula in \eqref{eq:change-of-var}, the probability density of $\yb_{[T]}$ is 
\begin{equation}
    \begin{split}
        p_{\yb}(\yb_{[T]})&=p_{\ub}(\G^{-1}(\yb_{[T]})\bigl|\det J_{\G^{-1}}(\yb_{[T]})\bigr| \\
        &= \prod_{t=0}^{T}p_u(u_t)\left|\det \frac{\partial u_t}{\partial y_t}\right|,
    \end{split}
\end{equation}
where the second equality follows from the independence of the latent variables $u_t$ and the causality of $\G^{-1}$. Following the framework of normalizing flows \cite{papamakarios2021normalizing}, we train the model by minimizing the negative log-likelihood (NLL):
\begin{equation*}
    L_{\mathrm{NLL}}=-\sum_{t=0}^{T}\left[\log p_u(u_t)+\log \left|\det \frac{\partial u_t}{\partial y_t}\right|\right].
\end{equation*}
The first term encourages the latent sequence to follow the Gaussian prior, while the second term accounts for the local volume change induced by the invertible transformation $\G^{-1}$. 

\subsection{Generative Trajectory Modeling}
To illustrate the application to generative modelling, we revisit the 2D obstacle-avoidance problem introduced in Section~\ref{sec:car-traj}. The goal is to learn a generative model that accurately reproduces the multi-modal distribution of the training set, consisting of  $10{,}000$ state trajectories.

We parameterize $\G^{-1}$ by an acyclic BiLipREN \eqref{eq:bilip-dyn} with 12 strongly I/O monotone REN layers, each comprising $10$ states and $32$ neurons. The Lipschitz parameters are $\mu=0.01$ and $\nu=20$. 

Mapping the training data back through $u= \mathcal G^{-1}(y)$ should result Gaussian white noise. This is supported by Figure~\ref{fig:qq}: the autocorrelation coefficients lie within the 95\% confidence interval, indicating negligible temporal correlation and the Q--Q plot further shows that the empirical quantiles closely align with those of a standard Gaussian distribution. For generation, we sample $200$ samples of $u$ as Gaussian white noise, and mapped them through
\eqref{eq:gen-model}. As illustrated in Figure~\ref{fig:dyn-nf-inv}, the generated trajectories capture the statistical variability and the multi-modal distribution of the training data.
\begin{figure}[!tb]
    \centering
    \includegraphics[width=\linewidth]{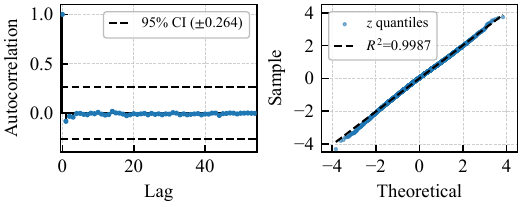}
    \caption{Autocorrelation (Left) and Q-Q plots (Right) of the latent variables produced by the BiLipREN.}
    \label{fig:qq}
\end{figure}
\begin{figure}[!tb]
    \centering
    \includegraphics[width=\linewidth]{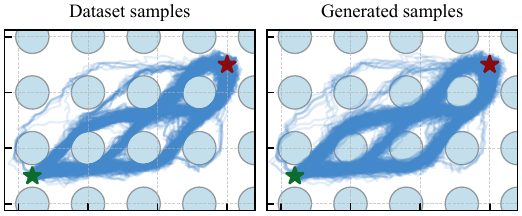}
    \caption{200 trajectories sampled from the dataset (Left) and generated (Right) via \eqref{eq:gen-model}.}
    \label{fig:dyn-nf-inv}
\end{figure}

\section{Conclusion}
In this paper, we introduced the notion of robust invertibility for nonlinear dynamical systems and provided tractable sufficient conditions through contraction and bi-Lipschitz properties. We then proposed the BiLipREN, which is a robustly invertible recurrent neural network model constructed via series composition of static orthogonal layers and strongly I/O monotone dynamic layers. The proposed model set admits a flexible and differentiable parameterization, allowing model training via standard gradient descent methods. We have illustrated the utility of the new model class on problems in internal model control, trajectory optimization, and generative modelling of signals and trajectories. 

\section*{References}
\bibliographystyle{IEEEtran}
\bibliography{ref.bib}

\end{document}